\title{Obtaining Approximately Optimal and Diverse Solutions via Dispersion}
\author{Jie Gao}{Department of Computer Science, Rutgers University, USA}{jg1555@rutgers.edu}{https://orcid.org/0000-0001-5083-6082}{This work is supported by NSF OAC-1939459, CCF-2118953 and CCF-1934924.}
\author{Mayank Goswami}{Department of Computer Science, Queens College, City University of New York, USA}{mayank.goswami@qc.cuny.edu}{[orcid]}{This work is supported by US National Science Foundation (NSF) awards CRII-1755791 and CCF-1910873.}
\author{Karthik C.\ S.}{Department of Computer Science, Rutgers University, USA}{karthik.cs@rutgers.edu}{}{This work was supported by Subhash Khot's Simons Investigator Award and by a grant from the Simons Foundation, Grant Number 825876, Awardee Thu D. Nguyen.}
\author{Meng-Tsung Tsai}{Institute of Information Science, Academia Sinica, Taiwan}{mttsai@iis.sinica.edu.tw}{https://orcid.org/0000-0002-2243-8666}{
This research was supported in part by the Ministry of Science and Technology of Taiwan under contract MOST grant 109-2221-E-001-025-MY3.
}
\author{Shih-Yu Tsai}{Department of Computer Science, Stony Brook University, USA}{shitsai@cs.stonybrook.edu}{}{}
\author{Hao-Tsung Yang}{School of Informatics, University of Edinburgh, UK}{yhaotsu@exseed.ed.ac.uk}{}{}
\authorrunning{Gao, Goswami, C.S., Tsai, Tsai, and Yang} 
\keywords{diversity, minimum spanning trees, maximum matchings, shortest paths, travelling salesman problem, dispersion problem} 
\definecolor{forestgreen}{rgb}{0.13, 0.55, 0.13}
\algnewcommand\algorithmicforeach{\textbf{for each}}
\newcommand{\SOL}{\mathsf{Sol}}
\newcommand{\m}{\Delta}
\newcommand{\goal}{\mathsf{goal}}
\newcommand{\1}{\mathbbm{1}}
\date{}
\begin{document}
	
	\maketitle

\begin{abstract}

There has been a long-standing interest in computing diverse solutions to optimization problems. Motivated by reallocation of governmental institutions in Sweden, in 1995 J. Krarup \cite{krarup1995peripatetic} posed the problem of finding $k$ edge-disjoint Hamiltonian Circuits of minimum total weight, called the peripatetic salesman problem (PSP). Since then researchers have investigated the complexity of finding diverse solutions to spanning trees, paths, vertex covers, matchings, and more. Unlike the PSP that has a constraint on the total weight of the solutions, recent work has involved finding diverse solutions that are all optimal.

However, sometimes the space of exact solutions may be too small to achieve sufficient diversity. Motivated by this, we initiate the study of obtaining sufficiently-diverse, yet approximately-optimal solutions to optimization problems. Formally, given an integer $k$, an approximation factor $c$, and an instance $I$ of an optimization problem, we aim to obtain a set of $k$ solutions to $I$ that a) are all $c$ approximately-optimal for $I$ and b) maximize the diversity of the $k$ solutions. Finding such solutions, therefore, requires a better understanding of the global landscape of the optimization function.  

We show that, given any metric on the space of solutions, and the diversity measure as the sum of pairwise distances between solutions, this problem can be solved by combining ideas from dispersion and multicriteria optimization. We first provide a general reduction to an associated budget-constrained optimization (BCO) problem, where one objective function is to be maximized (minimized) subject to a bound on the second objective function. We then prove that bi-approximations to the BCO can be used to give bi-approximations to the diverse approximately optimal solutions problem with a little overhead. We then give the following applications of our result:

\begin{itemize}
    \item Polynomial time $(2,c)$ bi-approximation algorithms for diverse $c$-approximate maximum matchings, and for diverse $c$-approximate $st$ shortest paths.
    \item Polynomial time $(4,2c)$ bi-approximation, a PTAS $(4,(1+\epsilon)c)$ bi-approximation and pseudopolynomial $(4,c)$ bi-approximation algorithms for diverse $c$-approximate minimum weight bases of a matroid. In particular, this gives us diverse $c$-approximate minimum spanning trees, advancing a step towards achieving diverse $c$-approximate TSP tours.
\end{itemize}

We explore the connection to the field of multiobjective optimization and show that the class of problems to which our result applies includes those for which the associated DUALRESTRICT problem defined by Papadimitriou and Yannakakis \cite{papadimitriou2000approximability}, and recently explored by Herzel et al. \cite{herzel2021one} can be solved in polynomial time.

On the way we prove a conditional 2-hardness of approximation lower bound on the problem of $k$-dispersion in metric spaces, resolving (up to ETH) the optimality of the greedy farthest point heuristic for dispersion. This also provides evidence of tightness of the factor 2 for diversity in most of our algorithms.

\end{abstract}

\section{Introduction}\label{sec:intro}

Techniques for optimization problems focus on obtaining optimal solutions to an objective function and have widespread applications ranging from machine learning, operations research, computational biology, networks, to geophysics, economics, and finance. However, in many scenarios, the optimal solution is not only computationally difficult to obtain, but can also render the system built upon its utilization vulnerable to adversarial attacks. Consider a patrolling agent tasked with monitoring $n$ sites in the plane. The most efficient solution (i.e., maximizing the frequency of visiting each of the $n$ sites) would naturally be to patrol along the tour of shortest length\footnote{We assume without loss of generality that the optimal TSP is combinatorially unique by a slight perturbation of the distances.} (the solution to TSP - the Traveling Salesman Problem). However, an adversary who wants to avoid the patroller can also compute the shortest TSP tour and can design its actions strategically~\cite{Yang2019-uc}. Similarly, applications utilizing the minimum spanning tree (MST) on a communication network may be affected if an adversary gains knowledge of the network~\cite{commander2007wireless}; systems using solutions to a linear program (LP) would be vulnerable if an adversary gains knowledge of the program's function and constraints. 

One way to address the vulnerability is to use a set of approximately optimal solutions and randomize among them. However, this may not help much to mitigate the problem, if these approximate solutions are combinatorially too ``similar'' to the optimal solution. For example, all points in a sufficiently small neighborhood of the optimal solution on the LP polytope will be approximately optimal, but these solutions are not too much different and the adversaries can still effectively carry out their attacks. Similarly one may use another tree instead of the MST, but if the new tree shares many edges with the MST the same vulnerability persists. Thus $k$-best enumeration algorithms (\cite{gabow1977two, hara2017enumerate,Lawler72,lindgren2017exact,murty1968algorithm}) developed for a variety of problems fall short in this regard.

One of the oldest known formulations is the Peripatetic Salesman problem (PSP) by Krarup \cite{krarup1995peripatetic}, which asks for $k$ edge-disjoint Hamiltonian circuits of minimum total weight in a network. Since then, several researchers have tried to  compute diverse solutions for several optimization problems \cite{baste2022diversity,baste2019fpt,fomin2020diverse,hanaka2021finding}. Most of these works are on graph problems, and diversity usually corresponds to the size of the symmetric difference of the edge sets in the solutions. Crucially, almost all of the aforementioned work demands either every solution individually be optimal, or the set of solutions in totality (as in the case of the PSP) be optimal. Nevertheless, \textbf{the space of optimal solutions may be too small to achieve sufficient diversity}, and it may just be singular (unique solution). In addition, for NP-complete problems finding just one optimal solution is already difficult. While there is some research that takes the route of developing FPT algorithms for this setting \cite{baste2019fpt,fomin2021diverse}, to us it seems practical to also consider the relaxation to approximately-optimal solutions. 

This motivates the problem of finding a set of diverse and \textit{approximately optimal} solutions, which is the problem considered in this article. The number of solutions $k$ and the desired approximation factor $c>1$ is provided by the user as input. Working in the larger class gives one more hope of finding diverse solutions, yet every solution has a guarantee on its quality.

\subsection{Our Contributions}

We develop approximation algorithms for finding $k$ solutions to the given optimization problem: for every solution, the quality is bounded by a user-given approximation ratio $c>1$ to the optimal solution and the diversity of these $k$ solutions is maximized. Given a metric on the space of solutions to the problem, we consider the diversity measure given by the sum (or average) of pairwise distances between the $k$ solutions. Combining ideas from the well-studied problem on dispersion (which we describe next), we reduce the above problem to a budget constrained optimization (BCO) program. 

\subsection{Dispersion} Generally speaking, if the optimization problem itself is $\mathcal{NP}$-hard, finding diverse solutions for that problem is also $\mathcal{NP}$-hard (see Proposition~\ref{prop:general_NP} for more detail). On the other hand, interestingly, even if the original problem is not $\mathcal{NP}$-hard, finding diverse and approximately optimal solutions can still be $\mathcal{NP}$-hard. 
This is due to the connection of the diversity maximization objective with the general family of problems that consider selecting $k$ elements from the given input set with maximum ``dispersion'', defined as max-min distance, max-average distance, and so on. 

The dispersion problem has a long history, with many variants both in the metric setting and the geometric setting~\cite{ERKUT199048,Kuby87dispersion,WANG1988281}. 
For example, finding a subset of size $k$ from an input set of $n$ points in a metric space that maximizes the distance between closest pairs or the sum of distances of the $k$ selected points are both $\mathcal{NP}$-hard~\cite{abbar2013kdnn,ravi1994heuristic}. 
For the max-sum dispersion problem, the best known approximation factor is 2 in the metric setting~\cite{birnbaum09improved,Hassin1997-uk} and $1.571+\epsilon$ with $\epsilon>0$ for the 2D Euclidean setting~\cite{ravi1994heuristic}. It is not known whether there is a PTAS for the max-min dispersion problem. 

\noindent\textbf{Conditional hardness} In this paper, we provide new hardness results for the max-sum $k$-dispersion problem. We show that, assuming a standard complexity theoretic conjecture (Exponential Time Hypothesis), there is no polynomial time algorithm for the $k$-dispersion problem in a metric space which approximates it to a factor better than $2$ (Theorem~\ref{thm:dispersion}). This provides a tight understanding for the max-sum $k$-dispersion problem which has been studied since the 90s \cite{ravi1994heuristic}.  

\noindent\textbf{Dispersion in exponentially-sized space} We make use of the general framework of the 2-approximation algorithm~\cite{BorodinJLY17,ravi1994heuristic} to the max-sum $k$-dispersion problem, a greedy algorithm where the $j$th solution is chosen to be the most distant/diverse one from the first $j-1$ solutions.
Notice that in our setting, there is an important additional challenge to understand the space within which the approximate solutions stay. In all of the problems we study, the total number of solutions can be \textit{exponential in the input size}. Thus we need to have a non-trivial way of navigating within this large space and carry furthest insertion without considering all points in the space. This is where our reduction to budget constrained problem comes in.

\noindent\textbf{Self avoiding dispersion} Furthermore, even after implicitly defining the $j$th furthest point insertion via some optimization problem, one needs to take care that the (farthest, in terms of sum of distances) solution does not turn out to equal one of the previously found $j-1$ solutions, as this is a requirement for the furthest point insertion algorithm. This is an issue one faces because of the implicit nature of the furthest point procedure in the exponential-sized space of solutions: in the metric $k$-dispersion problem, it was easy to guarantee distinctness as one only considered the $n-(j-1)$ points not yet selected. 

\subsection{Reduction to Budget Constrained Optimization} Combining with dispersion, we reduce the diversity computational problem to a budget constrained optimization (BCO) problem where the budget is an upper (resp. lower) bound if the quality of solution is described by a minimization (resp. maximization) problem. Recall that the number of solutions $k$ and the approximation factor $c$ is input by the user.

We show how using an $(a,b)$ bi-approximation algorithm for the BCO problem provides a set of $O(a)$-diverse, $bc$ approximately-optimal solutions to the diversity computational problem (the hidden constant is at most $4$). This main reduction is described in~\cref{thm:reduction}.

The main challenge in transferring the bi-approximation results because of a technicality that we describe next. Let $\Omega(c)$ be the space of $c$ approximate solutions. A $(*, b)$ bi-approximation algorithm to the BCO relaxes the budget constraint by a factor $b$, and hence only promises to return a faraway point in the larger space $\Omega(b\cdot c)$. Thus bi-approximation of BCO do not simply give a farthest point insertion in the space of solutions, but return a point in a larger space. Nevertheless, we prove in Lemma~\ref{largermetric} that in most cases, one loses a factor of at most $4$ in the approximation factor for the diversity.

Once the reduction to BCOs is complete, for diverse approximate matchings, spanning trees and shortest paths we exploit the special characteristics of the corresponding BCO to solve it optimally ($a=b=1$). For other problems such as diverse approximate minimum weight spanning trees, and the more general minimum weight bases of a matroid, we utilize known bi-approximations to the BCO to obtain bi-approximations for the diversity problem. For all problems except diverse (unweighted) spanning trees\footnote{While an exact algorithm for diverse unweighted spanning trees is known \cite{hanaka2021finding}, we give a faster (by a factor $\Omega(n^{1.5}k^{1.5}/\alpha(n, m))$ where $\alpha(\cdot)$ denotes the inverse of the Ackermann function) $2$-approximation here.}, our algorithms are the first polynomial time bi-approximations for these problems.

We also connect to the wide literature on multicriteria optimization and show that our result applies to the entire class of problems for which the associated DUALRESTRICT problem (defined by Papadimitriou  and Yannakakis \cite{papadimitriou2000approximability}, and recently studied by Herzel et al. \cite{herzel2021one}) has a polynomial time solution. We discuss this in more detail after presenting our reduction.

 The rest of this paper is organized as follows: we survey related work in~\cref{sec:relatedwork}, and formulate the problem in~\cref{sec:framework}. In~\cref{sec:reduction} we describe our main results on dispersion (Theorem~\ref{thm:dispersion}) and the reduction to the budget constrained optimization problem (Theorem~\ref{thm:reduction}). \cref{sec:st,sec:shortestpath,sec:matching,sec:matroid} describe four applications of our technique to various problems such as diverse approximate matchings, shortest paths, minimum spanning trees, and minimum weight bases of a matroid. We remark that this list is by no means exhaustive, and we leave finding other interesting optimization problems which are amenable to our approach for future research. We end with open problems in~\cref{sec:conclusion}.

\section{Related Work}\label{sec:relatedwork}

Recently there has been a surge of interest in the tractability of finding diverse solutions for a number of combinatorial optimization problems, such as spanning trees, minimum spanning trees, $k$-paths, shortest paths, $k$-matchings,  etc.~\cite{fomin2020diverse,fomin2021diverse,Hanaka22,Hanaka21,hanaka2021finding}. Most of the existing work focuses on finding diverse optimal solutions. In cases when finding the optimal solution is NP-complete, several works have focused on developing FPT algorithms \cite{baste2019fpt,fomin2021diverse}. Nevertheless, as pointed out in~\cite{Hanaka21}, it would be more practical to consider finding a set of diverse ``short'' paths rather than one set of diverse shortest paths. They show that finding a set of approximately shortest paths with the maximum diversity is NP-hard, but leave the question of developing approximation algorithms open, a question that we answer in our paper for several problems. Similarly the problem of finding diverse maximum matchings was proved to be NP-hard in \cite{fomin2020diverse}. We remark that the main difference between our result and previous work is that our algorithms can find a diverse set of $c$-approximate solutions in polynomial time. If the attained diversity is not sufficient for the application, the user can input a larger $c$, in hopes of increasing it.

\noindent\textbf{Multicriteria Optimization:} In this domain, several optimization functions are given on a space of solutions. Clearly, there may not be a single solution that is the best for all objective functions, and researchers have focused on obtaining Pareto-optimal solutions, which are solutions that are non-dominated by other solutions. Put differently, a solution is Pareto-optimal if no other solution can have a better cost for all criteria. Since exact solutions are hard to find, research has focused on finding $\epsilon$ Pareto-optimal solutions, which are a $1+\epsilon$ factor approximations of Pareto-optimal solutions. Papadimitriou and Yannakakis \cite{papadimitriou2000approximability} showed that under pretty mild conditions, any mutlicriteria optimization problem admits an $\epsilon$ Pareto-optimal set of fully polynomial cardinality. In terms of being able to \textit{find} such an $\epsilon$ Pareto-optimal  set, they show that a (FPTAS) PTAS exists for the problem if and only if an associated GAP problem can be solved in (fully) polynomial time. Very recently, Herzel et al.\cite{herzel2021one} study the class of problems for which a FPTAS or PTAS exists for finding $\epsilon$ Pareto-optimal solutions that are \textit{exact} in one of the criteria. Clearly such problems are a subset of the ones characterized by GAP. Herzel et al. \cite{herzel2021one} characterize the condition similarly: a FPTAS (PTAS) exists if and only if an associated DUALRESTRICT problem can be solved in (fully) polynomial  time. For more details we refer  the reader to the survey by Herzel at al. \cite{herzel2021approximation}.
\section{Problem Statement}\label{sec:framework}

First, we define some notations.
We use the definition of optimization problems given in \cite{ACGKMP99} with additional formalism as introduced in \cite{GK20}.

\begin{definition}[Optimization Problem]
An \textbf{optimization problem} $\Pi$ is characterized by the following quadruple of objects $(I_\Pi,\SOL_{\Pi},\m_\Pi,\goal_\Pi)$, where:
\begin{itemize}
\item $I_\Pi$ is the set of instances of $\Pi$. In particular for every $d\in\mathbb{N}$, $I_\Pi(d)$ is the set of instances of $\Pi$ of input size at most $d$ (bits); 
\item $\SOL_\Pi$ is a function that associates to any input instance $x\in I_\Pi$ the set of feasible solutions of $x$;
\item $\m_\Pi$ is the measure  function\footnote{We define the measure function only for feasible solutions of an instance. Indeed if an algorithm solving the optimization problem outputs a non-feasible solution then, the measure just evaluates to -1 in case of maximization problems and $\infty$ in case of minimization problems.}, defined for pairs $(x,y)$ such that $x\in I_\Pi$ and $y\in \SOL_\Pi(x)$. For every such pair $(x,y)$,  $\m_\Pi(x,y)$ provides a non-negative integer which is the value of the feasible solution $y$; 
\item $\goal_\Pi\in\{\min,\max\}$ specifies whether $\Pi$ is a maximization or minimization problem.
\end{itemize}
\end{definition}

We would like to identify a subset of our solution space which are (approximately) optimal with respect to our measure function. To this effect, we define a notion of approximately optimal feasible solution.

\begin{definition}[Approximately Optimal Feasible Solution]
Let $\Pi(I_\Pi,\SOL_{\Pi},\m_\Pi,\goal_\Pi)$ be an optimization problem and let $c\ge 1$. For every $x\in I_\Pi$ and $y\in \SOL_\Pi(x)$ we say that $y$ is a \textbf{$c$-approximate optimal solution of} $x$ if for every $y'\in \SOL_\Pi(x)$ we have $\m_\Pi(x,y)\cdot c \ge \m_\Pi(x,y')$ if $\goal_\Pi=\max$ and  $\m_\Pi(x,y)\le \m_\Pi(x,y')\cdot c$ if $\goal_\Pi=\min$.
\end{definition}

\begin{definition}[Computational Problem]
Let $\Pi(I_\Pi,\SOL_{\Pi},\m_\Pi,\goal_\Pi)$ be an optimization problem and let $\lambda:\mathbb{N}\to\mathbb{N}$. The \textbf{computational problem associated with $(\Pi,\lambda)$} is given as input an instance $x\in I_\Pi(d)$ (for some $d\in\mathbb{N}$) and real $c:=\lambda(d)\ge 1$ find a $c$-approximate optimal feasible solution of $x$. 
\end{definition}

\begin{definition}[Diversity Computational Problem]\label{defproblem}
Let $\Pi(I_\Pi,\SOL_{\Pi},\m_\Pi,\goal_\Pi)$ be an optimization problem and let $\lambda:\mathbb{N}\to\mathbb{N}$. Let $\sigma_{\Pi,t}$ be a diversity measure that maps every $t$ feasible solutions of an instance of $I_\Pi$ to a non-negative real number. The \textbf{diversity computational problem associated with $(\Pi,\sigma_{\Pi,t},k, \lambda)$} is given as input an instance $x\in I_\Pi(d)$ (for some $d\in\mathbb{N}$), an integer $k:=k(d)$, and real $c:=\lambda(d)\ge 1$, find $k$-many $c$-approximate solutions $y_1,\ldots ,y_k$ to $x$ which maximize the value of $\sigma_{\Pi,k}(x,y_1,\ldots ,y_k)$. 
\end{definition}

\begin{proposition}\label{prop:general_NP}
Let $\Pi(I_\Pi,\SOL_{\Pi},\m_\Pi,\goal_\Pi)$ be an optimization problem and let $\lambda:\mathbb{N}\to\mathbb{N}$. Let $\sigma_{\Pi,t}$ be a diversity measure that maps every $t$ feasible solutions of an instance of $I_\Pi$ to a non-negative real number.
If the computational problem associated with $(\Pi,\lambda)$ is $\mathcal{NP}$-hard then, the diversity computational problem associated with $(\Pi,\sigma_{\Pi,t},\lambda)$ is also $\mathcal{NP}$-hard. 
\end{proposition}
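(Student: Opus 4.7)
The plan is a straightforward reduction from the computational problem to the diversity computational problem. Suppose for contradiction that the diversity computational problem admits a polynomial time algorithm $\mathcal{A}$. Given any instance $x \in I_\Pi(d)$ of the computational problem with $c := \lambda(d)$, the reduction invokes $\mathcal{A}$ on the triple $(x, k, c)$ with $k = 1$. Since $\sigma_{\Pi,1}(x, y_1)$ is a function of only a single feasible solution, the diversity maximization is vacuous, and $\mathcal{A}$ must return a $c$-approximate optimal feasible solution $y_1$ of $x$. This directly solves the computational problem in polynomial time, contradicting its $\mathcal{NP}$-hardness.

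The only step meriting comment is the admissibility of $k = 1$. The proposition writes the parameterization as $(\Pi, \sigma_{\Pi,t}, \lambda)$ without explicitly pinning down $k$, so the natural reading is that the conclusion holds for \emph{some} function $k(d)$; the constant choice $k \equiv 1$ then suffices. If instead one wishes to obtain the statement for every polynomially bounded $k(d)$, a padding argument handles it: construct $x'$ as a disjoint union of $k(d)$ independent copies of $x$ on disjoint element universes, so that every $c$-approximate feasible solution of $x'$ decomposes coordinate-wise into $c$-approximate feasible solutions of $x$; projecting onto any single coordinate of the output $\mathcal{A}(x', k(d), c)$ then recovers a valid answer to the original computational problem in polynomial time.

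I do not anticipate any genuine obstacle here: the proposition is essentially a sanity check asserting that augmenting an already $\mathcal{NP}$-hard optimization problem with an extra diversity objective cannot make it easier. The only piece of care needed is to ensure that the chosen $k$ lies within the class of functions over which the diversity problem is being quantified, which is why both the $k \equiv 1$ reading and the padding backup are worth mentioning.
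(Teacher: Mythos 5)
Your reduction is correct and is exactly the argument the paper leaves implicit (the proposition is stated without proof as a sanity check). One small simplification: the padding discussion is unnecessary, since for any $k$ an algorithm for the diversity computational problem must output $k$ feasible solutions each of which is $c$-approximately optimal, so returning $y_1$ from its output already solves the computational problem associated with $(\Pi,\lambda)$.
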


Therefore the interesting questions arise when we compute problems associated with $(\Pi,\lambda)$ which are in $\mathcal{P}$, or even more when,  $(\Pi,\1)$ is in $\mathcal{P}$
where $\1$ is the constant function which maps every element of the domain to 1. For the remainder of this paper, we will consider $\lambda(d)$ to be the constant function, and will simply refer to the constant as $c$.

Finally, we define bicriteria approximations for the diversity computational problem:

\begin{definition}[$(\alpha,\beta)$ bi-approximation for the Diversity Computational Problem]\label{defbiapprox}
Consider the diversity computational problem associated with $(\Pi,\sigma_{\Pi,t}, k, c)$, and a given instance $x \in I_{\Pi}(d)$ (for some $d \in \mathbb{N}$). An algorithm is called an $(\alpha,\beta)$ bi-approximation for the diversity computational problem if it outputs $k$ feasible solutions $y_1,\ldots,y_k$ such that a) $y_i$ is a $\beta \cdot c$-approximate optimal feasible solution to $x$ for all $ 1 \leq i \leq k$, and b) for any set $y_{1}^{'},\ldots,y_{k}^{'}$ of $k$-many $c$-approximate optimal feasible solutions, $\sigma_{\Pi,k}(y_1,\cdots,y_k)\cdot \alpha \geq \sigma_{\Pi,k}(y_{1}^{'},\cdots,y_{k}^{'})$. Furthermore, such an algorithm is said to run in polynomial time if the running time is polynomial in $d$ and $k$. 
\end{definition}

\section{The Reduction: Enter Dispersion and Biobjective Optimization}\label{sec:reduction}

As stated in the introduction, our problems are related to the classical dispersion problem in a metric space. Here we state the dispersion problem and show that under the exponential time hypothesis, 2-approximation is actually tight for the $k$-dispersion problem. We will then use dispersion to reduce the problem of finding diverse, approximately optimal solutions to solving an associated budget constrained optimization problems.

\subsection{Dispersion Problem}

\begin{definition}[$k$-Dispersion, total distance]\label{def:dispersion}
Given a finite set of points $P$ whose pairwise distances satisfy the triangle inequality and an integer $k \ge 2$, find a set $S \subseteq P$ of cardinality $k$ so that $W(S)$ is maximized, where $W(S)$ is the sum of the pairwise distances between points in $S$. 
\end{definition}

The main previous work on the $k$-dispersion problem relevant to us is~\cite{ravi1994heuristic}, where the problem was named as Maximum-Average Facility Dispersion problem with triangle inequality (MAFD-TI). The problems are equivalent as maximizing the average distance between the points also maximizes the sum of pairwise distances between them and vice-versa. 

The $k$-dispersion problem is $\mathcal{NP}$-hard, but one can find a set $S$ whose $W(S)$ is at least a constant factor of the maximum possible in polynomial time by a greedy procedure~\cite{ravi1994heuristic}. We call the greedy procedure \emph{furthest insertion}. It works as follows. Initially, let $P$ be a singleton set that contains an arbitrary point from the given set. While $|S| < k$, add to $
S$ a point $x \notin S$ so that $W(S \cup \{x\}) \ge W(S \cup \{y\})$ for any $y \notin S$. Repeat the greedy addition until $S$ has size $k$. The final $S$ is a desired solution, which is shown to be a 4-approximation in~\cite{ravi1994heuristic}. It is worth noting that the furthest insertion in~\cite{ravi1994heuristic}  initializes $S$ as a furthest pair of points in the given set, and the above change does not worsen the approximation factor. 
In a later paper~\cite{BorodinJLY17}, the above greedy algorithm that chooses an arbitrary initial point is shown to be a 2-approximation,
which is a tight bound for this algorithm~\cite{birnbaum09improved}.

\begin{lemma}[Furthest Insertion in~\cite{BorodinJLY17,ravi1994heuristic}]\label{4approx}
The $k$-dispersion problem can be 2-approximated by the furthest insertion algorithm.
\end{lemma}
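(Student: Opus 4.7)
The plan is to combine the greedy property with the triangle inequality. Let $S = (s_1, \ldots, s_k)$ denote the greedy set in insertion order, set $S_j := \{s_1, \ldots, s_j\}$, and write $D(x; T) := \sum_{t \in T} d(x, t)$, so that $W(S_j) = W(S_{j-1}) + D(s_j; S_{j-1})$. Fix any optimal $k$-set $O$ with $W(O) = \mathrm{OPT}$; the target is $W(S_k) \ge \mathrm{OPT}/2$. The greedy rule guarantees that, for every $j \ge 2$ and every $o \in O \setminus S_{j-1}$, one has $D(s_j; S_{j-1}) \ge D(o; S_{j-1})$.

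The first ingredient is the triangle inequality: for every point $s$ and every pair $o \ne o'$ in $O$, $d(o, o') \le d(o, s) + d(o', s)$. Summing over all pairs of $O$ and over $s \in S_{j-1}$, and using that each $o$ appears $k-1$ times in the right-hand side for a fixed $s$, yields $\sum_{o \in O} D(o; S_{j-1}) \ge (j-1)\,W(O)/(k-1)$. The second ingredient bounds this same sum from above: splitting by membership in $S_{j-1}$, the terms with $o \in O \setminus S_{j-1}$ contribute at most $|O \setminus S_{j-1}|\,D(s_j; S_{j-1})$ by the greedy rule, while the terms with $o \in O \cap S_{j-1}$ are bounded by $2\,W(S_{j-1})$ via double-counting pairs inside $S_{j-1}$. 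Combining the two gives the per-step inequality
\[ \frac{(j-1)\,W(O)}{k-1} \;\le\; |O \setminus S_{j-1}|\,D(s_j; S_{j-1}) + 2\,W(S_{j-1}). \]

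The final step aggregates these $k-1$ inequalities over $j = 2, \ldots, k$. In the clean case $O \cap S = \emptyset$, the slack term vanishes and an unweighted telescoping of $(j-1)\,W(O)/(k-1) \le k\,D(s_j; S_{j-1})$ directly yields $W(S_k) \ge W(O)/2$, since $\sum_{j=2}^k (j-1)/(k(k-1)) = 1/2$. The general case is handled by a factor-revealing linear program in the variables $D(s_j; S_{j-1})$, $W(S_j)$, and $|O \setminus S_{j-1}|$, subject to the per-step inequality, the recursion $W(S_j) = W(S_{j-1}) + D(s_j; S_{j-1})$, and the monotonicity $W(S_{j-1}) \le W(S_j)$; its optimal ratio $W(S_k)/W(O)$ equals $1/2$.

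The main obstacle is this final aggregation. A naive weighted sum allows the overlap slack $2\,W(S_{j-1})$ to accumulate across all $k-1$ steps and yields only a constant factor strictly larger than $2$. The tight constant is earned by the LP-based analysis, which picks exactly the right weights (equivalently, identifies the extremal distance configuration) so that the slack cancels telescopically; the looser choice of weights in the earlier analysis is precisely what had limited the ratio to $4$ rather than $2$.
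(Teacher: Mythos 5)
First, a point of reference: the paper does not actually prove this lemma --- it is imported as a black box from \cite{BorodinJLY17} (with the weaker 4-approximation from \cite{ravi1994heuristic}). The only place the paper touches the underlying argument is in the proof of Lemma~\ref{farthestapprox}, where it cites Lemma~1 of \cite{birnbaum09improved} (``for any partial set there is an unselected point whose average distance to it is at least $\mathrm{OPT}/2$'') and finishes with a one-line induction. Your per-step inequality is correct on both sides, and your telescoping in the disjoint case $O \cap S = \emptyset$ cleanly yields $W(S_k) \ge W(O)/2$; this is genuinely in the spirit of the Birnbaum--Goldman factor-revealing analysis rather than the paper's citation-plus-induction route.

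The gap is the general (overlapping) case, which is the entire content of the improvement from $4$ to $2$, and your proposal only asserts it. Concretely, the LP you describe --- variables $D(s_j;S_{j-1})$, $W(S_j)$, $|O\setminus S_{j-1}|$, constrained only by your per-step inequality, the recursion, and monotonicity --- does \emph{not} have optimal ratio $1/2$. Take $k=4$, normalize $W(O)=3$, and suppose $s_1\in O$ while $s_2,s_3,s_4\notin O$, so $|O\setminus S_{j-1}|=3$ for every $j$. Writing $x_j:=D(s_j;S_{j-1})$, your constraints read $1\le 3x_2$, $2\le 3x_3+2x_2$, and $3\le 3x_4+2(x_2+x_3)$; the point $(x_2,x_3,x_4)=(1/3,\,4/9,\,13/27)$ is feasible and gives $W(S_4)=34/27<3/2=W(O)/2$. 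So the constraint set you list is provably too weak to certify the factor $2$: the slack term $2W(S_{j-1})$ cannot be cancelled by any choice of multipliers over these inequalities alone. The actual argument in \cite{birnbaum09improved} retains finer information (the individual greedy-dominance inequalities $D(s_j;S_{j-1})\ge D(o;S_{j-1})$ per optimal point $o$, and a more careful accounting of the overlap's contribution) before LP duality closes at $1/2$. As written, your proof establishes the factor $2$ only when the greedy set and the optimum are disjoint, and something strictly weaker otherwise.
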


The running time of the furthest insertion algorithm is polynomial in $|S|$ the size of $S$, as it performs $k$ iterations, each performing at most $O(k|S|)$ distance computations/lookups. Note that in our case $S$ is the collection of objects of a certain type (matchings, paths, trees, etc.). Hence the size of our metric space is typically exponential in $|V|$ and $|E|$. This adds a new dimension of complexity to the traditional dispersion problems studied.

We prove that the $k$-dispersion problem cannot be approximated better than $2$ given the Exponential Time Hypothesis. 

\begin{theorem}[Inapproximability of $k$-Dispersion]\label{thm:dispersion}
Assuming the Exponential Time Hypothesis, for every $\varepsilon>0$, no algorithm running in polynomial time can approximate the $k$-Dispersion problem to a factor of $2-\varepsilon$ .
\end{theorem}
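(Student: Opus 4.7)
My plan is to reduce a gap version of the Densest-$k$-Subgraph problem to $k$-Dispersion through a $\{1,2\}$-valued metric embedding. Concretely, the source promise problem is: given $(G,k)$, decide whether $G$ contains a clique of size $k$, versus every $k$-vertex induced subgraph of $G$ has at most $\varepsilon\binom{k}{2}$ edges, where I can make $\varepsilon>0$ as small as I like. I would argue this promise problem is ETH-hard for every fixed $\varepsilon>0$ by invoking the PCP theorem to start from a constant-gap Gap-$3$SAT instance (ETH-hard in time $2^{o(n)}$), translating it to a Max-Clique gap via the FGLSS-style clique-cloud reduction, and then amplifying using a graph product so that the NO instance has uniformly sparse $k$-subgraphs, not merely a small maximum clique.

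With that hardness in hand, the embedding is the classical one: take $V(G)$ as the point set and define $d(u,v)=2$ if $\{u,v\}\in E(G)$ and $d(u,v)=1$ otherwise. Since every distance lies in $\{1,2\}$, the triangle inequality holds trivially. For any $k$-set $S\subseteq V$ inducing $m$ edges in $G$,
\[
W(S) \;=\; 2m + \left(\binom{k}{2}-m\right) \;=\; \binom{k}{2}+m.
\]
In the YES case, taking $S$ to be the guaranteed $k$-clique yields $W(S)\ge 2\binom{k}{2}$, while in the NO case every $k$-subset satisfies $W(S)\le (1+\varepsilon)\binom{k}{2}$. The ratio between the two optima is therefore at least $\tfrac{2}{1+\varepsilon}$, which exceeds $2-\varepsilon'$ for any prescribed $\varepsilon'>0$ once $\varepsilon$ is chosen small enough. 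Hence a polynomial-time $(2-\varepsilon')$-approximation for $k$-Dispersion would separate the two cases, contradicting ETH.

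The main obstacle is the first step: obtaining the strong density gap assuming only ETH, rather than the stronger Gap-ETH under which analogous Densest-$k$-Subgraph inapproximability is more familiar (e.g., through Manurangsi's work). The subtlety is that the standard PCP$+$FGLSS pipeline yields a gap on the maximum clique number, whereas the $\{1,2\}$-metric is sensitive to \emph{every} edge inside the chosen $k$-set, not just cliques; a graph with small clique number can still host a moderately dense $k$-subgraph. To bridge this, I plan to use a suitable (randomized or derandomized lexicographic) graph product that turns the clique gap into a gap on the densest $k$-subgraph, while keeping the blown-up graph of polynomial size and $k$ scaling as needed so that the NO-case density drops to the desired $\varepsilon$. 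Carefully bookkeeping these parameters, and verifying that the embedding remains a metric throughout the amplification, will be the delicate part of the argument.
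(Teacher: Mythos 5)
Your reduction is exactly the one in the paper: the same $\{1,2\}$-valued metric on $V(G)$, the same computation $W(S)=\binom{k}{2}+m$, and the same conclusion that a clique gives $2\binom{k}{2}$ while a $\delta$-sparse $k$-subgraph gives at most $(1+\delta)\binom{k}{2}$, yielding the $2-\varepsilon$ gap. The only divergence is in how you obtain the source hardness. The paper simply cites Manurangsi (2017) for precisely the promise problem you need: under ETH, no polynomial-time algorithm distinguishes ``$G$ has a $k$-clique'' from ``every $k$-vertex induced subgraph has at most $\delta\binom{k}{2}$ edges,'' for every constant $\delta>0$. You correctly identify that this is the delicate step, but your plan to re-derive it via PCP $\to$ FGLSS $\to$ graph-product amplification is the weak point: that pipeline is not known to yield the required \emph{uniform density} soundness under plain ETH (as you yourself observe, clique-number gaps do not control the density of arbitrary $k$-subsets, and Manurangsi's actual proof uses a substantially different, more involved argument). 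Since the needed statement is an existing theorem, the right move is to invoke it as a black box rather than attempt to reconstruct it; with that substitution your argument is complete and identical to the paper's.
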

\begin{proof}
We use the inapproximability of the densest $k$ subgraph problem.

\begin{theorem}[Inapproximability of Densest $k$-subgraph with Perfect Completeness \cite{M17}]\label{thm:denseksubgraph}
Given the Exponential Time Hypothesis, for every $\delta>0$, no algorithm running in polynomial time can distinguish between the following two cases, given as input a graph $G$ and an integer $k$:
\begin{description}
\item[Completeness:] There is a clique of size $k$ in $G$.
\item[Soundness:] The subgraph induced by any $k$ vertices in $G$ has at most $\delta\cdot \binom{k}{2}$ many edges.
\end{description}
\end{theorem}

Given a graph $G=(V,E)$, we define a distance function $\Delta:V\times V\to\mathbb{R}^{\ge 0}$ as follows. 
\[
\forall u,v\in V,\ \Delta(u,v)=\begin{cases}
0\text{ if }u=v,\\
2\text{ if }(u,v)\in E,\\
1\text{ otherwise}.
\end{cases}
\]
Note that $(V,\Delta)$ is a metric space. Suppose there is a set of $k$ vertices $S$ such that it forms a clique in $G$, then, $\sum_{\{u,v\}\in \binom{S}{2}}\Delta(u,v)=2 \cdot \binom{k}{2}.$
On the other hand, if for some set $S$ of $k$ vertices we have that the subgraph induced by $S$ in $G$ has at most $\delta\cdot \binom{k}{2}$ many edges, then,
$
\sum_{\{u,v\}\in \binom{S}{2}}\Delta(u,v)\le (1+\delta)\cdot \binom{k}{2}.
$
The theorem statement thus follows from Theorem~\ref{thm:denseksubgraph} by setting $\delta:=\varepsilon/2$.
\end{proof}

\subsection{Reduction to Budget Constrained Optimization}

Recall the definitions of the Diversity Computational Problem (Definition~\ref{defproblem}) and $(a,b)$ bi-approximations (Definition~\ref{defbiapprox}). As the input instance $x \in I_{\Pi}$ will be clear from context, we drop the dependence on $x$, and assume a fixed input instance to a computational problem. Thus $\SOL_{\Pi}$ will denote the set of feasible solutions, and $\Delta_{\Pi}(y)$ the measure of the feasible solution $y$.

\noindent\textbf{Diversity and similarity measures from metrics} Let $d: \SOL_{\Pi} \times \SOL_{\Pi} \rightarrow \mathbb{R}^{+}$ be a metric on the space of feasible solutions. When such a metric is available, we will consider the diversity function $\sigma_{\Pi,t}: \SOL_{\Pi} \times \cdots \times \SOL_{\Pi} \rightarrow \mathbb{R}^{+}$ that assigns the diversity measure $\sum_{i,j} d(y_{i},y_{j})$ to a $t$-tuple of feasible solutions $(y_1,\cdots,y_t)$. Also, given such a metric $d$, define $D$ to be the diameter of $\SOL_{\Pi}$ under $d$, i.e., $D =\max_{y,y' \in \SOL_{\Pi}} d(y,y')$. In many cases, we will be interested in the \textit{similarity} measure $s_{\Pi,t}$ defined by $s_{\Pi,t}(y_1,\cdots,y_t) = \sum_{i,j} (D - d(y_{i},y_{j}))$. The examples the reader should keep in mind are graph objects such as spanning trees, matchings, shortest paths, Hamiltonian circuits, etc., such that $d(y,y')$ denotes the Hamming distance, a.k.a. size of the symmetric difference of the edge sets of $y$ and $y'$, and $s$ denotes the size of their intersection.

In the remainder of the paper we consider the above total distance (resp. similarity) diversity measures $\sigma_{\Pi,t}$ arising from the metric $d$ (resp. similarity measure $s$), and we will parameterize the problem by $d$ (resp. $s$) instead.

\begin{definition}[Budget Constrained Optimization]
Given an instance of a computational problem $\Pi$, a $c \geq 1$, and a set $y_1,\ldots,y_i$ of feasible solutions in $\SOL_{\Pi}$, define the metric budget constrained optimization problem $BCO(\Pi, (y_1,\ldots,y_i), c, d)$ as follows:

\begin{itemize}
    \item If $\goal_{\pi} = \min$, define $\Delta^{*}:= \min_{y \in \SOL_{\Pi}} \m_{\Pi}(y)$. Then
    $BCO(\Pi, (y_1,\ldots,y_i), c, d)$ is the problem
    
\begin{equation}
\begin{aligned}
\max \quad & f_{d}(y):= \sum_{j=1}^{i} d(y,y_i)\\
\textrm{s.t.} \quad & \m_\Pi(y) \leq c \cdot \Delta^{*} \\
  & y \in \SOL_{\Pi} \setminus \{y_1,\ldots,y_i\}    \\
\end{aligned}
\end{equation}

\item If $\goal_{\pi} = \max$, define $\Delta^{*}:= \max_{y \in \SOL_{\Pi}} \m_{\Pi}(y)$. Then
    $BCO(\Pi, (y_1,\ldots,y_i), c, d)$ is the problem
    
\begin{equation}
\begin{aligned}
\max \quad & f_{d}(y):= \sum_{j=1}^{i} d(y,y_i)\\
\textrm{s.t.} \quad & \m_\Pi(y) \cdot c \geq  \Delta^{*} \\
  & y \in \SOL_{\Pi} \setminus \{y_1,\ldots,y_i\}    \\
\end{aligned}
\end{equation}

\item Define the similarity budget constrained problem $BCO(\Pi, (y_1,\ldots,y_i), c, s)$, where $s$ is a given similarity measure, with the same constraint set as above (depending on $\goal_{\pi}$), but with the objective function changed to $g_{s}(y):= \min \sum_{j=1}^{i} s(y,y_i)$ instead of $f_{d}(y)=\max \sum_{j=1}^{i} d(y,y_i)$.

\end{itemize}
\end{definition}

\begin{definition}[Bi-approximation to BCO]
An algorithm for an associated BCO is called an $(a,b)$ bi-approximation algorithm if for any $1 \leq i \leq k$, it outputs a solution $y$ such that the following holds.
\begin{itemize}
    \item If $\goal_{\Pi} =\min$ and the associated BCO is $BCO(\Pi, (y_1,\ldots,y_i), c, d)$, then a) $y \in \SOL_{\Pi} \setminus \{y_1,\cdots,y_i\}$, b) $\Delta_{\Pi}(y) \leq b \cdot c \cdot \Delta^{*}$, and c) for all $y'$ satisfying the constraints of $BCO(\Pi, (y_1,\ldots,y_i), c, d)$, $f_{d}(y) \cdot a \geq f_{d}(y')$.
    \item If $\goal_{\Pi} =\max$ and the associated BCO is $BCO(\Pi, (y_1,\ldots,y_i), c, d)$, then a) $y \in \SOL_{\Pi} \setminus \{y_1,\cdots,y_i\}$, b) $\Delta_{\Pi}(y) \cdot b \cdot c \geq \Delta^{*}$, and c) for all $y'$ satisfying the constraints of $BCO(\Pi, (y_1,\ldots,y_i), c, d)$, $f_{d}(y) \cdot a \geq f_{d}(y')$.
    \item If $\goal_{\Pi} =\min$ and the associated BCO is $BCO(\Pi, (y_1,\ldots,y_i), c, s)$, then a) $y \in \SOL_{\Pi} \setminus \{y_1,\cdots,y_i\}$, b) $\Delta_{\Pi}(y) \leq b \cdot c \cdot \Delta^{*}$, and c) for all $y'$ satisfying the constraints of $BCO(\Pi, (y_1,\ldots,y_i), c, s)$, $g_{s}(y) \leq g_{s}(y') \cdot a$.
    \item If $\goal_{\Pi} =\max$ and the associated BCO is $BCO(\Pi, (y_1,\ldots,y_i), c, s)$, then a) $y \in \SOL_{\Pi} \setminus \{y_1,\cdots,y_i\}$, b) $\Delta_{\Pi}(y) \cdot b \cdot c \geq \Delta^{*}$, and c) for all $y'$ satisfying the constraints of $BCO(\Pi, (y_1,\ldots,y_i), c, s)$, $g_{s}(y) \leq g_{s}(y') \cdot a$.
\end{itemize}

\end{definition}

We are now ready to state our main theorem.

\begin{theorem}[Reduction of DCP to BCO]
\label{thm:reduction}
Consider an input $(\Pi, k,d, c)$ to the diversity computational problem (DCP).
\begin{itemize}

    \item For metric BCO,
    \begin{enumerate} 
    \item An $(a,1)$ bi-approximation to $BCO(\Pi, (y_1,\ldots,y_i), c, d)$ can be used to give a $(2a,c)$ approximation to the DCP, and 
    \item An $(a,b)$ bi-approximation to $BCO(\Pi, (y_1,\ldots,y_i), c, d)$ can be used to give a $(4a,bc)$ approximation to the DCP. 
    \end{enumerate}
    \item For similarity BCO,
    \begin{enumerate}[resume]
        \item A $(1,1)$ bi-approximation to $BCO(\Pi, (y_1,\ldots,y_i), c, s)$ can be used to give a $(2,c)$ approximation to the DCP, 
    \item A $(1,b)$ bi-approximation to $BCO(\Pi, (y_1,\ldots,y_i), c, s)$ can be used to give $(4,bc)$ approximation to the DCP, 
    \item A $(1+\epsilon, 1)$  bi-approximation to $BCO(\Pi, (y_1,\ldots,y_i), c, s)$ can be used to give $(4,c)$ approximation to the DCP, under the condition that the average pairwise distance in the optimal solution to the DCP is at least $D \frac{4\epsilon}{1+2\epsilon}$. 
    \end{enumerate}
    
\end{itemize}

In all of the above, the overhead for obtaining a bi-approximation for the DCP, given a bi-approximation for BCO problem, is $O(k)$. 
\end{theorem}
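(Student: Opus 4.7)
The plan is to implement the greedy furthest-insertion algorithm with the BCO solver as a subroutine. I first obtain any $c$-approximate solution $y_1$ (e.g., by applying a standard approximation algorithm for $\Pi$, or by running the BCO with an empty set of previously selected solutions). Then for $j=2,\ldots,k$, I invoke the appropriate BCO on the tuple $(y_1,\ldots,y_{j-1})$ and take $y_j$ to be its output. Because the BCO's feasible set explicitly excludes the previously selected $y_1,\ldots,y_{j-1}$, the procedure is automatically self-avoiding, addressing the issue flagged in the introduction about furthest insertion in an exponential-size space. This gives the $O(k)$ overhead claimed.

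For the metric case with an $(a,1)$ bi-approximation (item~1), every returned $y_j$ is a genuine $c$-approximate solution, so the entire output lies in $\SOL_\Pi(c)$. At each step the returned $y_j$ maximizes $\sum_{\ell<j} d(y_j,y_\ell)$ up to factor $a$ over this space. I then mimic the analysis underlying Lemma~\ref{4approx} (as in~\cite{BorodinJLY17}): the standard telescoping that shows greedy furthest insertion $2$-approximates max-sum dispersion continues to work when the per-step furthest point is replaced by an $a$-approximate furthest point, inflating the ratio to $2a$. For item~2, the returned $y_j$ may only lie in the relaxed space $\SOL_\Pi(bc)$, so the greedy is implicitly running dispersion in $\SOL_\Pi(bc)\supseteq\SOL_\Pi(c)$. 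To charge the diversity we produce against the optimal $k$-set restricted to $\SOL_\Pi(c)$, I invoke Lemma~\ref{largermetric} (referenced in the introduction), which costs an additional factor of $2$ when comparing max-sum dispersion across a space and a superspace sharing the same metric. Composing this with the $2a$ greedy loss yields $(4a,bc)$.

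For the similarity cases (items~3 and~4) I exploit the identity $s(y,y')=D-d(y,y')$. Then
\[
g_s(y)=\sum_{\ell=1}^{i}s(y,y_\ell)=iD-\sum_{\ell=1}^{i}d(y,y_\ell),
\]
so minimizing $g_s$ is equivalent, up to an additive constant independent of $y$, to maximizing $f_d$. Consequently a $(1,b)$ bi-approximation to the similarity BCO is an \emph{exact} furthest-point oracle with respect to distance at the relaxed approximation level $bc$. Items~3 and~4 therefore reduce directly to items~1 and~2 with $a=1$, yielding $(2,c)$ and $(4,bc)$ respectively.

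The main obstacle is item~5. A $(1+\varepsilon,1)$ bi-approximation to similarity BCO guarantees $g_s(y)\le(1+\varepsilon)g_s(y^*)$, which after the identity above rearranges to
\[
\sum_{\ell=1}^{i}d(y,y_\ell)\ \ge\ \sum_{\ell=1}^{i}d(y^*,y_\ell)-\varepsilon\,g_s(y^*),
\]
an \emph{additive} rather than multiplicative loss whose size may be as large as $\varepsilon iD$. To convert this into a clean multiplicative bound I will use the hypothesis that the optimal DCP solution has average pairwise distance at least $D\cdot\frac{4\varepsilon}{1+2\varepsilon}$: this lower bound on $\sum_{\ell}d(y^*,y_\ell)$ in terms of $D$ is precisely what is needed so that the additive slack $\varepsilon iD$ gets absorbed into a constant multiplicative factor in the greedy telescoping argument. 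Carrying this bookkeeping through, combined with the $(a=1,\,b=1)$ instance of item~3 inflated by the super-space cost of Lemma~\ref{largermetric}, yields the $(4,c)$ bound. The delicate part, and the one I expect to require the most care, is choosing exactly where in the greedy analysis to pay the additive loss so that the hypothesis on the optimum's average distance is invoked in the correct form and the stated threshold $\frac{4\varepsilon}{1+2\varepsilon}$ emerges sharply.
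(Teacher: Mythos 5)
Your overall architecture is the same as the paper's: greedy furthest insertion driven by the BCO oracle (self-avoidance delegated to the BCO's own constraint set), with the per-step approximation loss absorbed by the ``approximate farthest point still approximates dispersion'' argument (Lemma~\ref{farthestapprox}) and the relaxed-budget loss by the superspace argument (Lemma~\ref{largermetric}). Items 1--4 match the paper's proof essentially step for step, including the observation that a $(1,b)$ similarity oracle is an exact distance-farthest-point oracle because $g_s(y)=iD-f_d(y)$.

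The genuine problem is in your item 5, in two respects. First, your final assembly is miswired: you propose to pay ``the super-space cost of Lemma~\ref{largermetric}'' on top of item 3, but in Type 5 we have $b=1$, so the returned point already lies in $\Omega(c,(y_1,\ldots,y_i))$ and there is no superspace to pay for. The correct source of the extra factor of $2$ is that the returned point $y$ is only a \emph{$2$-approximate farthest point} within the same space: the hypothesis on the optimum's average distance, together with the fact that some point of the feasible set always lies at average distance at least $\mathrm{OPT}/2$ from \emph{any} partial set (this needs the observation that the relevant lemmas of \cite{birnbaum09improved} apply to arbitrary point sets, not only greedily built ones---a step you use implicitly but do not justify), gives $f_d(y^*)\ge \tfrac{2\epsilon}{1+2\epsilon}Di$, hence $\epsilon\, g_s(y^*)\le \tfrac{1}{2}f_d(y^*)$, hence $f_d(y)\ge f_d(y^*)-\epsilon\, g_s(y^*)\ge f_d(y^*)/2$; one then feeds $a=2$ into Lemma~\ref{farthestapprox} to get $2\cdot 2=4$. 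Second, you leave this computation as a plan rather than carrying it out, and since this is the only place in the theorem where the specific threshold $\tfrac{4\epsilon}{1+2\epsilon}$ must emerge, the proof is incomplete without it. The consolation is that your additive-slack inequality $\sum_{\ell} d(y,y_\ell)\ge\sum_{\ell} d(y^*,y_\ell)-\epsilon\, g_s(y^*)$ is exactly the right starting point, and the chain above closes it.
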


\begin{proof}
Given a $BCO(\Pi, (y_1,\ldots,y_i), c, d)$, denote by $\Omega(c, (y_1,\ldots,y_i))$ its feasible set. When $b=1$, the returned solution $y$ lies in $\Omega(c, (y_1,\ldots,y_i))$. On the other hand when $b >1$, the returned solution lies in the larger space $\Omega(bc, (y_1,\ldots,y_i))$. Our proof will handle these cases separately. We will first state two lemmas concerning the greedy heuristic of farthest point insertion, the first of which we suspect is folklore, but we could not find a reference. The second lemma is the heart of the argument, and is (as far as we know) novel. 

\begin{lemma}\label{farthestapprox}
Let $A$ be a metric space, and consider the $k$-dispersion problem in $A$ as in Definition~\ref{def:dispersion}. Consider an oracle $O$ that, given a set $P_{i}:= \{p_1,\cdots,p_i\}$ of $i$ points in $A$, returns a point $p \in A \setminus P_i$ such that $\sum_{j=1}^{i} d(p,p_{i}) \cdot a \geq \sum_{j=1}^{i} d(p',p_{i})$ for all $p' \in A \setminus P_i$ for some constant $a>1$. Then in $k$ calls to $O$ one can obtain a $2a$-approximate solution to the $k$-dispersion problem on $A$. In other words, the total pairwise distance between the $k$ points of the solution obtained via $O$ is at most that of the optimal solution to the $k$-dispersion divided by $2a$.
\end{lemma}
\begin{proof}
We will prove the statement of the lemma for the average distance, which will imply the claim for the total distance. Let $OPT$ be the average distance between  solutions in the optimal solution to the $k$-dispersion problem. By Lemma 1 in \cite{birnbaum09improved}, at any step $i$, there is always a solution whose distance to the already selected set $P_{i}$ is at least $OPT/2$, and hence the same holds for the farthest solution. This implies that the point $p$ returned by the algorithm has a distance at least $OPT/2a$. A simple inductive argument now finishes the proof: assume the lemma holds until stage $i$, i.e., the average pairwise distance between points in $P_{i}$ is at least $OPT/2a$. When $p_{i+1}=p$ is chosen, its average distance to the $i$ points in $P_{i}$ is at least $OPT/2a$ by the above reasoning, and so the $P_{i+1}$ has an average pairwise distance of at least $OPT/2a$. 
\end{proof}

\begin{lemma}\label{largermetric}
 Assume $A$ is a metric space, and $B \subseteq A$. Suppose there is an oracle $O$ that, given a set of $i$ points $P_{i}:=\{p_1,\cdots,p_{i}\}$ in $A$, outputs a point $p \in A \setminus P_{i}$ such that $\sum_{j=1}^{i} d(p_{j},p) \geq \max_{p' \in B} \sum_{j=1}^{i} d(p_{j},p')$. Then the oracle can be used to give a set of $k$ points in $A$ whose diversity is at least $1/4$ of an optimal solution to the $k$-dispersion problem in $B$. 
\end{lemma}
\begin{proof}
For a point $p \in A$, denote by $n_{B}(p)$ its closest point in $B$. By definition, if $p \in B$ then $n_{B}(p)=p$. 

Assume that the oracle $O$ in the hypothesis ofthe lemma exists. Then given the point set $P_{i}$, consider the point set $Q_{i} =\{ n_{B}(p_{j}) : 1 \leq j \leq i \}$. As mentioned before, a careful analysis of Lemmas 3-6 in \cite{birnbaum09improved} reveals that there always exists a point $p^{*} \in B$ whose average distance to the points in $Q_{i}$ is at least $OPT(B)/2$, where $OPT(B)$ denotes the optimal average pairwise distance between points in the optimal solution to the $k$ dispersion problem on $B$.

Consider the three points $p_{i}, n_{B}(p_{i})$ and $p^{*}$. By the triangle inequality, $d(p_{i},p^{*}) \geq d(p^{*}, n_{B}(p_{i})) - d(p_{i}, n_{B}(p_{i}))$. However, by definition, $d(p_{i}, n_{B}(p_{i})) \leq d(p_{i},p^{*})$. This implies that $d(p_{i},p^{*}) \geq d(p^{*}, n_{B}(p_{i})) - d(p_{i}, p^{*})$, implying that $d(p_{i},p^{*}) \geq d(p^{*}, n_{B}(p_{i}))/2$. Summing over all $i$, we get that the total (and hence the average) distance of $p^{*}$ to points $P_{i}$ is at least half of its total distance to points in $Q_{i}$. Since the average distance of $p^{*}$ to points in $Q_{i}$ is at least $OPT(B)/2$, we get that the average distance of $p^{*}$ to points in $P_{i}$ is at least $OPT(B)/4$. 

However, the oracle $O$ returns a point $p \in A$ whose average distance to $P_{i}$ is at least that of $p^{*}$ to points in $P_{i}$, which as argued above is at least $OPT(B)/4$. By the same inductive argument as in Lemma~\ref{farthestapprox}, we get that the average distance between points returned by $k$ calls to $O$ is at least $OPT(B)/4$, proving that it gives a $4$-approximation. 
\end{proof}

Given the above lemmas, we now prove the theorem.
\begin{itemize}
    \item Type 1: An $(a,1)$ bi-approximation to $BCO(\Pi, (y_1,\ldots,y_i), c, d)$ returns a point in $\Omega(c,(y_1,\ldots,y_i) )$. Therefore we only apply Lemma~\ref{farthestapprox} and we are done.
    \item Type 2: An $(a,b)$ bi-approximation to $BCO(\Pi, (y_1,\ldots,y_i), c, d)$ returns a point in $\Omega(bc,(y_1,\ldots,y_i) )$, that is an $a$-approximation to the farthest point in $\Omega(c, (y_1,\ldots,y_i) )$. Applying Lemma~\ref{farthestapprox} and Lemma~\ref{largermetric} together, we get the claimed result.
    \item Type 3: A $(1,1)$ bi-approximation to the similarity $BCO(\Pi, (y_1,\ldots,y_i), c, s)$ returns the farthest point in $\Omega(c, (y_1,\ldots,y_i))$, and so this case follows from the $2$-approximation guarantee of the farthest insertion heuristic, without applying either lemma.
    \item Type 4: A $(1,b)$ bi-approximation to $BCO(\Pi, (y_1,\ldots,y_i), c, d)$ returns the farthest point in $\Omega(bc,(y_1,\ldots,y_i) )$. In this case, we only apply Lemma~\ref{largermetric} to get the claimed result.
\end{itemize}

We prove the remaining case of Type 5 separately. A $(1+\epsilon, 1)$  bi-approximation to $BCO(\Pi, (y_1,\ldots,y_i), c, s)$ returns a point in $\Omega(c,(y_1,\ldots,y_i)) $. However, \textit{this point is not the farthest point, nor necessarily an approximation of it}. This is because the guarantee is only on the total similarity, and not the total distance. In other words, a $1+\epsilon$ approximation to $g_{s}(y):=\sum_{j=1}^{i} s(y,y_{i})$ is not a $1+\epsilon$ approximation to $f_{d}(y):= \sum_{j=1}^{i} d(y,y_{i})$-- in fact, these functions are the ``opposite'' of each other, as $f_{d}(y) = D i - g_{s}(y)$.

Let $y^{*}$ be the point in $\Omega(c, (y_1,\ldots,y_i))$ that maximizes $f_{d}(y)$, i.e., the farthest point. The $(1+\epsilon,1)$ bi-approximation algorithm returns a point $y$ such that $g_{s}(y) \leq a \cdot g_{s}(y^{*})$. Consider the condition that the average pairwise distance in the optimal solution $OPT$ to the DCP is at least $D \frac{4\epsilon}{1+2\epsilon}$. We refer the reader now to Lemma 1 in \cite{birnbaum09improved}, which proves that during farthest point insertion, at step $i+1$ in the algorithm when $P_{i} = \{p_1,\cdots,p_{i}\}$ have been selected, there is a point $p \notin P_{i}$ whose average distance is at least $OPT/2$. While this lemma seems like it can only be applied to point sets $P_{i}$ formed during an iterated farthest insertion, a careful analysis of Lemmas 3,4,5 and 6 in \cite{birnbaum09improved} used to prove it reveals that the argument does not demand that this be the case. That is, for any set $P_{i}$ of $i$ points, there always exist a $p \notin P_{i}$ whose average distance is at least $OPT/2$.

This implies the existence of a point $y' \in \Omega(c, (y_1,\ldots,y_i))$ such that the average distance of $y'$ to the $y_{i}$s is at least $OPT/2 \geq D \frac{2\epsilon}{1+2\epsilon} = D (1- \frac{1}{1+2\epsilon})$. Since $y^{*}$ is the farthest point, the same condition holds for $y^{*}$.

This implies that the total distance $f_{d}(y^{*}) \geq D i  (1- \frac{1}{1+2\epsilon})$, implying that $g_{s}(y^{*}) = Di -f_{d}(y^{*}) \leq Di/(1+2\epsilon)$. We then have the following string of inequalities

\begin{eqnarray}
g_{s}(y^{*}) &\leq& \frac{Di}{1+2\epsilon} \notag \\
\iff Di &\geq& (1+2\epsilon) g_{s}(y^{*}) \notag\\
\iff \frac{Di}{2} &\geq& (\frac{1}{2}+\epsilon)g_{s}(y^{*}) \notag \\
\iff Di - (1+\epsilon) g_{s}(y^{*}) &\geq& \frac{1}{2}(D - g_{s}(y^{*})) \notag 
\end{eqnarray}

But we know that the returned point $y$ satisfies $g_{s}(y) \leq (1+\epsilon) g_{s}(y^{*})$, implying that $f_{d}(y) = Di - g_{s}(y) \geq\frac{1}{2}(D - g_{s}(y^{*})) = f_{d}(y^{*})/2$. Hence $y$ is a $2$-approximation to the farthest point, and we can apply Lemma~\ref{farthestapprox} to complete the proof.
\end{proof}

A few remarks are in order:

\begin{itemize}
    \item The above theorem provides a recipe for solving the diversity computational problem for any given optimization problem. As long as \textit{either} the metric or the similarity budget constrained optimization problems can be solved or approximated in polynomial time, one has an analogous result for the DCP.
    \item In the remainder of this paper we will see several application that follow from the above 5 ``types'' of bi-approximations available. These include DCP for Maximum Matching (Type 1), DCP for shortest path (Type 3), DCP for minimum weight bases of a matroid, minimum spanning trees (Types 4 and 5).
    \item Whenever either $a$ or $b$ (or both) is set to be $1+\epsilon$, we call a bi-approximation for the BCO problem an FPTAS if the running time is polynomial in $1/\epsilon$ in addition to being polynomial in $d$ and $i$. Otherwise we call it a PTAS.
    
\end{itemize}

\noindent\textbf{Relation to Multicriteria Optimization:} Observe that for similarity BCOs, we need either $a$ or $b$ to be $1$. This class of biobjective problems that have a PTAS that is exact in one of the criteria is a special case of the multicriteria problems that have a PTAS that is exact in one of the criteria. Herzel et al. \cite{herzel2021one} showed that this class is exactly the class of problems for which the DUALRESTRICT version of the problem, posed by Diakonikolas and Yannakakis \cite{diakonikolas2010small}), can be solved in polynomial time. These are also the class of problems having a polynomial-time computable approximate $\epsilon$-Pareto set that is exact in one objective. This equivalence means that our theorem is applicable to this entire class of problems.

\subsection{Relaxed BCOs and Self-Avoidance} Before we delve into our applications, we describe another challenge in directly applying  results from multicriteria optimization literature. For a BCO, the second constraint demands that $y \in \SOL_{\Pi} \setminus \{y_1,\cdots,y_i\}$. Intuitively $y$ is the farthest point to the set of already discovered solutions $\{y_1,\cdots,y_i\}$, and because it is defined implicitly, without the second constraint $y$ may equal one of the $y_j$ ($1\leq j \leq i$). Consider an alternate BCO, which we call $BCO^{r}$ where the constraint is relaxed to $y \in \SOL_{\Pi}$.  For many graph problems, solving $BCO^{r}$ combined with the approach by Lawler \cite{Lawler72} gives a solution to the original BCO. This is extremely useful because most of the literature on multicriteria optimization concerns optimization of the relaxed type of problems $BCO^{r}$, and one can borrow results derived before without worrying about the second constraint. We remark that for other problems, $k$-best enumeration algorithms (see \cite{gabow1977two, hara2017enumerate,Lawler72,lindgren2017exact,murty1968algorithm}
for examples) may be useful to switch from the BCO to its relaxed version. \textit{Thus any algorithm for $BCO^{r}$ can be used, modulo the self-avoiding constraint (to be handled using Lawler's approach), to give a polynomial time algorithm for the Diversity Computational Problem with the same guarantees as in Theorem~\ref{thm:reduction}}. We provide examples of the approach by Lawler in subsequent sections where we consider specific problems. 
\section{Application 1: Diverse Spanning Trees}\label{sec:st}

In this section, we discuss the diverse spanning trees problem, which is the diversity computational problem for spanning trees with Hamming distance function as the diversity measure. Let
	$G=(V,E)$ be an undirected graph. The problem aims to output a set $S$ of $k$ spanning trees $T_1,\cdots,T_k$ of $G$ such that the sum of the pairwise distances $\sum_{i,j \in S} d(T_i,T_j) $ is maximized, where $d$ is the Hamming distance between the edge sets of the trees. While this problem actually has an exact algorithm running in time $O((kn)^{2.5}m)$ \cite{hanaka2021finding}, we get a faster approximation algorithm.
	
		 \begin{theorem}\label{spanningtreetheorem}
Given a simple graph $G = (V, E)$, there exists an $O(knm\alpha(n, m))$-time algorithm, where $\alpha(\cdot)$ is the inverse of the Ackermann function,  that generates $k$ spanning trees $T_1,\cdots, T_k$, such that the sum of all pairwise Hamming distances is at least half of an optimal set of $k$ diverse spanning trees.
\end{theorem}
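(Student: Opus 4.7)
The plan is to invoke Theorem~\ref{thm:reduction} in the ``Type 3'' regime. Every spanning tree of $G$ is trivially a $c=1$-approximate solution to the underlying (unweighted) ``find a spanning tree'' problem, so I take $c=1$ and solve the associated similarity BCO exactly in each iteration of farthest insertion. A $(1,1)$ bi-approximation for the BCO then upgrades, via Type~3, to a $(2,1)$ bi-approximation for the DCP, which is exactly the factor-$2$ diversity guarantee claimed.

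In the $(i{+}1)$-st iteration, given already-chosen trees $T_1,\ldots,T_i$, I must minimize $g_s(T) = \sum_{j=1}^{i} |T \cap T_j|$ over spanning trees $T \in \SOL_\Pi \setminus \{T_1,\ldots,T_i\}$. Letting $c_e := |\{j \le i : e \in T_j\}|$ for each edge $e$, one observes that $g_s(T) = \sum_{e \in T} c_e$. Thus, modulo the self-avoidance constraint, the BCO is \emph{exactly} a minimum spanning tree problem on $G$ with integer edge weights $w_e := c_e \in \{0,1,\ldots,i\}$, solvable in $O(m\alpha(n,m))$ time by Chazelle's MST algorithm. Maintaining the counters $c_e$ takes only $O(n)$ work per iteration (updating the $n-1$ edges of the newly chosen tree).

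The remaining challenge is to enforce $T \notin \{T_1,\ldots,T_i\}$, and I plan to handle this with Lawler's partition technique. After a lexicographic tie-breaking that makes the MST $T^\star$ of $(G,w)$ unique, I return $T^\star$ if it differs from every $T_j$; otherwise $T^\star$ coincides with exactly one $T_j$, and I then compute, for each of the $n-1$ edges $e \in T^\star$, the minimum spanning tree of $G - e$ (one MST call per edge), and among those candidates I pick the overall $g_s$-minimizer that avoids every $T_l$. This costs $O(n)$ MST calls per iteration and $O(knm\alpha(n,m))$ time overall, matching the claimed bound.

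The main obstacle I anticipate is the correctness of this self-avoidance step: one must show that some candidate in the Lawler enumeration necessarily avoids \emph{all} previously chosen trees, not just the one matching $T^\star$. This follows from the standard fact that the $n-1$ Lawler classes (trees of $G$ excluding a specified edge of $T^\star$) are pairwise disjoint and cover every spanning tree distinct from $T^\star$, so each other $T_l$ lies in exactly one class and can ``spoil'' only that class's candidate; since the number of spanning trees of $G$ comfortably exceeds $k$ on any instance where diversity is a meaningful question, at least one class yields a valid tree. If this counting argument is found unsatisfying, one can equivalently invoke a $k$-best MST enumeration inside the iteration to produce the first MST avoiding all $T_l$, which also fits within the stated running time.
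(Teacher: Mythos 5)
Your reduction of the per-iteration subproblem to a minimum spanning tree computation with multiplicity weights $c_e$ is exactly the paper's subroutine (its Lemma~\ref{lem:opt_k-st}), and invoking the reduction theorem in the Type~3 regime is also how the paper obtains the factor $2$. The genuine gap is in the self-avoidance step. Your single level of Lawler branching computes only the $n-1$ candidates $\mathrm{MST}(G-e)$ for $e\in T^\star$ and returns the best one avoiding $\{T_1,\ldots,T_i\}$; this does not return the $g_s$-minimizer over all trees outside $\{T_1,\ldots,T_i\}$. Concretely, the true constrained optimum $T^{opt}$ misses some edge $e^*\in T^\star$, but if $\mathrm{MST}(G-e^*)$ happens to be another forbidden tree $T_l$ (entirely possible, since the $T_j$ were themselves chosen to be light under closely related weight functions), then the class of $e^*$ contributes only the discarded candidate $T_l$ and $T^{opt}$ is never generated; the tree you return can be arbitrarily worse in $g_s$ (e.g.\ when $g_s(T^{opt})=0$), so you do not even get an $a$-approximate farthest point for bounded $a$, and the $(2,1)$ guarantee via Lemma~\ref{farthestapprox} collapses. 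Your supporting claim that each $T_l$ ``lies in exactly one class and can spoil only that class's candidate'' is also false as stated: with classes defined as ``trees missing edge $e_j$ of $T^\star$'' the classes are not disjoint, and in any case a spoiled class loses its entire contribution, not just the single tree $T_l$. For $k>n$ even the existence claim fails, since all $n-1$ candidates can simultaneously be forbidden.

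The paper closes this hole by iterating the branching: starting from $\widetilde{T}_1=T^\star$, it repeatedly expands the best still-forbidden candidate found so far (adding $\{\mathrm{MST}(G-e)\}_{e\in\widetilde{T}_\ell}$ to the candidate pool at stage $\ell$) until $i+1$ distinct trees $\widetilde{T}_1,\ldots,\widetilde{T}_{i+1}$ have been enumerated in nondecreasing $g_s$-order; by pigeonhole one of them is non-forbidden, and the first such tree in this order is the exact constrained optimum. This is precisely the ``$k$-best enumeration'' fallback you mention in your last sentence, so that variant of your argument is the correct one --- but it must be the main construction rather than an optional alternative, and it costs up to $O(in)$ MST computations in iteration $i$ rather than the $O(n)$ you budget, so the time accounting needs to be redone accordingly. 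The exactness of this subroutine is what the claimed factor-$2$ diversity bound rests on.
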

\begin{proof}
We develop an exact $(1,1)$ polynomial time subroutine for the associated BCO problem. 
Using our reduction Theorem~\ref{thm:reduction}, we observe that since one is only required to output spanning trees, the associated budget constrained problem has no inequalities. Having obtained $i$ trees $T_1,\cdots,T_i$ (starting with an arbitrary spanning tree $T_1$), the BCO problem looks like

\begin{equation}
\begin{aligned}
\max \quad & f_{d}(y):= \sum_{j=1}^{i} d(T,T_i)\\
\textrm{s.t.} \quad & T \in \SOL_{\Pi} \setminus \{T_1,\ldots,T_i\}  \\
\end{aligned}
\end{equation}

The relaxed budget constrained problem then simply asks to maximize $\sum_{j=1}^{i} d(T,T_i)$. We first show how to solve this problem exactly, and then adapt the approach by Lawler \cite{Lawler72} to handle the self-avoiding constraint. The algorithm to maximize $\sum_{j=1}^{i} d(T,T_i)$ is very simple: give each edge $e$ a weight $w(e)=\sum_{j=1}^i \1(e \in T_j)$ and compute the minimum spanning tree  $T$ with respect to these edge weights.

\begin{lemma}
\label{lem:opt_k-st}
    The minimum spanning tree $T$ with respect to the edge weights $w$ satisfies  $\sum_{j=1}^{i} d(T_{j},T) \geq \sum_{j=1}^{i} d(T_{j},T')$ for any spanning tree $T'$.
\end{lemma}
\begin{proof}
Given input spanning trees $T_1,\cdots,T_{i}$, suppose $T^A$ is the tree returned by the algorithm in Section~4.1, and $T^*$ is an optimal spanning tree. Let $E_j$ be the subset of edges with weight $j$. That is, $E_j=\{e \in E| w(e)=j\}$ for all $0 \leq j \leq i$ and partition $E$ as $E=E_0 \cup E_1 \cup \cdots \cup E_i$. 

Let $c_j = |T^A \cap E_j|$ and $\beta_j = |T^* \cap E_j|$. To prove that $T^A$ is optimal, we will show that $\sum_{j=1}^i d(T^A,T_j) \geq \sum_{j=1}^i d(T^
*,T_j)$. 

Define $A=\sum_{j=1}^i d(T^A,T_j)$ and $B=\sum_{j=1}^i d(T^
*,T_j)$. Plugging  $\alpha_j$ and $\beta_j$ into the sum of pairwise distances from our tree and the optimal tree to constructed trees, we get 
\[A=\sum_{j=1}^i d(T^A,T_j) = (n-1)i -\sum_{j=1}^i j\alpha_j\] and
\[ B=\sum_{j=1}^i d(T^
*,T_j) = (n-1)i -\sum_{j=1}^i j\beta_j.\]
We will show that $A -B =\sum_{j=0}^i j(\beta_j -\alpha_j) \geq 0.$ We first express $A-B$ as the dot product of two vectors

\[A-B = <0,1, \cdots, i> <\beta_0-\alpha_0,\beta_1-\alpha_1,\cdots,\beta_i-\alpha_i>^t,\] where $<.>^t$ denotes transpose, and makes a column vector out of a row vector. We develop some more notation:
\[E^j:=E_0 \cup E_1 \cup \cdots \cup E_j \]
\[\alpha^j:= \alpha_0+\alpha_1 +\cdots + \alpha_j\]
\[\beta^j:= \beta_0+\beta_1 +\cdots + \beta_j\]
\[c^j := <0,1, \cdots,j>\]
\[\alpha_v^j:=<\alpha_0,\alpha_1 ,\cdots , \alpha_j>^t\]
\[\beta_v^j:=<\beta_0,\beta_1 ,\cdots , \beta_j>^t\]
With this we have that $A-B =c^i(\beta_v^i-\alpha_v^i).$

We will show $c^i \beta_v^i \geq c^i \alpha_v^i$ by induction on $i$, which will prove that $A \geq B$, completing the proof. 

We first claim that $\alpha^j \geq \beta^j$ for all $0 \leq j \leq i.$ To prove this, fix a $j$ and let $C_1$ be the number  of connected components (including any isolated vertices) of $T^A \cap E^j$ and $C_2$ be the number  of connected components of $T^* \cap E^j$. Clearly, $C_1 \leq C_2$ since for Kruskal's algorithm for minimum spanning trees is the greedy algorithm that adds as many edges from $E^j$ into the solution as possible without creating a cycle, and the addition of an edge decreases the number of connected components by one. Since,  $C_1 = n - \alpha^j$ and $C_2 = n - \beta^j$, $C_1 \leq C_2$ implies that $\alpha^j \geq \beta^j.$ 

Now, we continue our proof that $c^i \beta_v^i \geq c^i \alpha_v^i$ by induction on $i.$ Observe $||\alpha_v^i||_1=||\beta_v^i||_1=n^{'} :=n-1,$ as both spanning trees contain exactly $n-1$ edges.

\noindent\textbf{Base Case} For the base case of $ i = 1$, $c^1=<0,1>, \alpha_v^1=<\alpha_0,n'- \alpha_0>,$  and $\beta_v^1=<\beta_0,n'- \beta_0>.$ From the previously mentioned property, we know $\alpha_0= \alpha^0 \geq \beta^0 = \beta_0.$ Hence, $c^1 \beta_v^1 = n'-\beta_0 \geq n-\alpha_0= c^1\alpha_v^1.$

\noindent\textbf{The induction hypothesis} assumes that $c^i \beta_v^i \geq c^i \alpha_v^i$  for $i = K-1$, i.e., the statement is true for two vectors of length $K-1$. Now, consider when $i=K$. Since $||\alpha^{K-1}_v||_{1} = \alpha^{K-1} \geq ||\beta^{K-1}_v||_{1}=\beta^{K-1}$, there are two cases,

\noindent\textbf{Case 1)}$: ||\beta^{K-1}_v||_{1}=||\alpha^{K-1}_v||_{1} $  Then since $||\beta^{K}_v||_{1}=||\alpha^{K}_v||_{1}$, we get that $\alpha_K =\beta_K$. Using the induction hypothesis, we get 
\[c^K \beta_v^K= c^{K-1} \beta_v^{K-1} + K \beta_K \geq c^{K-1} \alpha_v^{K-1} + K \alpha_K = c^K \alpha_v^K.\]

\noindent\textbf{Case 2)} $||\alpha^{K-1}_v||_{1} > ||\beta^{K-1}_v||_{1}$. Then  $||\alpha_v^K||_1=||\beta_v^K||_1$ yields $\alpha_K <\beta_K.$ Let $d = \beta_K- \alpha_K = ||\alpha^{K-1}_v||_{1} -||\beta^{K-1}_v||_{1}.$ Then, $c^K \beta_v^K - c^K \alpha_v^K=c^{K-1} (\beta_v^{K-1} -\alpha_v^{K-1})+Kd$. We claim that this quantity is positive. This is because the minimum value of $c^{K-1} (\beta_v^{K-1} -\alpha_v^{K-1})$, subject to the conditions that a) $||\alpha^{K-1}_v||_{1} - ||\beta^{K-1}_v||_{1}=d>0$ and b) $||\alpha^{K}_v||_{1} = ||\beta^{K}_v||_{1}$, occurs when $\beta_{j} =\alpha_{j}$ for all $0 \leq j \leq K-2$, $\beta_{K-1}=\alpha_{K-1} - d$, and $\beta_{K}=\alpha_{K}+d$. For this setting, the value of $c^{K-1} (\beta_v^{K-1} -\alpha_v^{K-1})+Kd$ equals $-(K-1)d+Kd=d$, which is positive, completing the proof.
\end{proof}

If $T \neq T_j$ for all $1 \leq j \leq i$, that is, the new spanning tree is different from all previous trees, then we are done and Theorem~\ref{spanningtreetheorem} is proved using Lemma~\ref{lem:opt_k-st}.

\noindent\textbf{Self-Avoiding Constraint}  However, this is not guaranteed as our measure is the sum-of-distances and not the minimum distance. Note that for furthest point insertion, we need the point that is furthest from the current set, but does not belong to the current set. This is an issue we face because of the implicit nature of the furthest point procedure in the exponential-sized space of spanning trees: in the metric k-dispersion problem, it was easy to guarantee distinctness as one only considered the $n-i$ points not yet selected. 

We now show how to guarantee that the new tree is distinct. In case that $T = T_j$ for some $1 \leq j \leq i$, we use the approach by Lawler~\cite{Lawler72}. We will obtain $i+1$ distinct spanning trees $\widetilde{T}_1, \cdots, \widetilde{T}_{i+1}$, at least one of which must then be distinct from $T_1, \cdots,T_i$, which will be chosen as $T_{i+1}$. 
Set $\widetilde{T}_{1}=T_{j}$.
For every edge $e \in \widetilde{T}_{1}$, we find the minimum spanning tree $T_e$ (with respect to the same edge weights $w(e)$ as before) after deleting $e$ from the graph.  Among the collection $\{T_{e}\}_{e \in \widetilde{T}_{1}}$ of trees thus obtained, we find the one whose sum of distances from $T_1, \cdots,T_i$ is as large as possible, and set it to be $\widetilde{T}_{2}$. Note that  $\widetilde{T}_{2} \neq \widetilde{T}_{1}$. If $\widetilde{T}_{2} \neq T_{j}$ for all $1 \leq j \leq i$, then we are done; else,  we repeat the above procedure to obtain the collection $\{T_e\}_{e \in \widetilde{T_2}}$ and set $\widetilde{T}_{3}$ as the best one in $\{T_e\}_{e \in \widetilde{T_1}} \cup \{T_e\}_{e \in \widetilde{T_2}} \setminus \{\widetilde{T}_{1}, \widetilde{T}_{2}\}$, after which we are either done or we repeat to get  $\widetilde{T}_{4}$, and so on. We stop after obtaining  $\widetilde{T}_{i+1}$, and select the $\widetilde{T}_{(.)}$ that is distinct from all of $T_1, \cdots,T_i$. The proof of Lemma~\ref{lem:opt_k-st} implies that the $T_{i+1}$ thus obtained is the furthest tree from the $T_1, \cdots,T_i$ among all trees $T \notin \{T_1, \cdots,T_i\}$, and we have accomplished furthest point insertion in the space of spanning trees. 
In total, the above needs to compute $O(kn)$ instances of the minimum spanning tree, so the running time is $O(k n m \alpha(n, m))$~\cite{Chazelle00}.
\end{proof}

\section{Application 2: Diverse Approximate Shortest Paths}\label{sec:shortestpath}

Given a graph $G=(V, E)$, non-negative edge weights $w(e)$, two vertices $s$ and $t$, and a factor $c>1$, the diversity computational problem asks to output $k$ many $st$ paths, such that the weight of each path is within a factor $c$ of the weight of the shortest $st$ path, and subject to this constraint, the total pairwise distance between the paths is maximized. Here the distance between two paths is again the Hamming distance, or size of symmetric difference of their edge sets.

In~\cite{Hanaka21}, it is shown that finding $k$ shortest  paths with the maximum diversity (i.e. the average Hamming distance between  solutions) can be solved in polynomial time, but finding $k$ ``short'' paths with the maximum diversity is NP-hard. In contrast, in what follows, we will show that finding $k$ ``short'' paths with constant approximate diversity is  polynomial-time solvable.

\textit{We will show that the associated budget constrained optimization problem for this is of Type 3 in Theorem~\ref{thm:reduction}}. In other words, we will show that the BCO can be solved exactly. This will result in a $(2,c)$ approximation algorithm for the diversity computational problem.

Hence, we need an algorithm that implements: given a set $S$ of $c$-$st$-shortest paths, find a $c$-$st$-shortest path $P \notin S$ so that $W(S \cup \{P\})$ is maximum among all $W(S \cup \{P'\})$ for $c$-$st$-shortest path $P' \notin S$. Here, $W(
S')$ is the sum of all pairwise Hamming distances between two elements in $S'$. This is a special case of the bicriteria shortest paths, for which there is an algorithm in~\cite{NC82}. In our case, \textbf{one of the two weight functions is an integral function with range bounded in $[0, k]$}. Hence, it can be solved more efficiently than the solution in~\cite{NC82}, which can be summarized as following.

\begin{lemma}[Exact solution to the relaxed $BCO^{r}$ problem]
\label{lem:BiSP}
Given a real $c \ge 1$ and a directed simple graph $G = (V \cup \{s, t\}, E)$ associated with two weight functions on edges $\omega: E \rightarrow \mathbb{R}^+$ and $ f: E \rightarrow \{0, 1, \ldots, r\}$,  there is an $O(r|V|^3)$-time algorithm to output an $st$-path $P^*$ so that  
$\sum_{e \in E(P^*)} f(e)$ is minimized while retaining $\sum_{e \in E(P^*)} \omega(e) \le c \sum_{e \in E(P)} \omega(e)$ for all $st$-paths $P$.
\end{lemma}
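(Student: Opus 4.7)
The plan is to set up a two-dimensional dynamic program indexed by $(v,j)$, where $v \in V \cup \{s,t\}$ and $j \in \{0,1,\ldots,r(|V|-1)\}$ tracks the cumulative $f$-weight of an $sv$-walk, and to fill it in order of increasing $j$, using a within-level Dijkstra step to cope with edges of $f$-weight $0$.

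First I would compute $\delta_\omega(s,t)$, the ordinary $\omega$-shortest $st$-path length, by one run of Dijkstra, and set the budget $B := c \cdot \delta_\omega(s,t)$. Let $D[v][j]$ denote the minimum $\omega$-weight of an $sv$-walk whose cumulative $f$-weight is at most $j$. Since both $\omega$ and $f$ are nonnegative, deleting a cycle from any walk can only decrease both totals, so the optimum walk is automatically a simple path and $D[v][j]$ coincides with the corresponding minimum over simple paths. The recurrence is
\[
D[v][j] \;=\; \min\!\Bigl(\, D[v][j-1],\; \min_{(u,v)\in E}\, D[u][j-f(u,v)] + \omega(u,v) \,\Bigr),
\]
with $D[s][0]=0$, $D[v][0]=+\infty$ for $v \ne s$, and $D[\cdot][j]=+\infty$ for $j<0$. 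The answer is the smallest $j$ for which $D[t][j] \le B$, and the path $P^*$ is recovered by standard backpointer bookkeeping.

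The main subtlety, which I expect to be the only place where care is needed, is that edges with $f(e)=0$ create within-level dependencies $D[v][j] \leftarrow D[u][j] + \omega(u,v)$, so a level cannot be filled by a single pass. I would handle this by processing levels one at a time: for a fixed $j$, first compute an initial estimate $\widetilde{D}[v][j] := \min\bigl(D[v][j-1],\, \min\{D[u][j-f(u,v)]+\omega(u,v) : (u,v) \in E,\, f(u,v)>0\}\bigr)$ using only strictly positive $f$-edges and the already-known values at levels at most $j-1$; then run a multi-source Dijkstra on the subgraph of zero-$f$ edges, initializing the key of each vertex $v$ to $\widetilde{D}[v][j]$. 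Because $\omega \ge 0$, this converges to the correct $D[\cdot][j]$, and it can be implemented in $O(|V|^2)$ time per level in the dense-graph model.

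Summing over $j \in \{0,1,\ldots,r(|V|-1)\}$ gives the claimed $O(r|V|^3)$ total running time, with the initial single-source shortest path computation and the backpointer reconstruction absorbed in this bound. The reason this beats the generic bicriteria algorithm of~\cite{NC82} is precisely that the integrality and bounded range of $f$ cap the number of DP levels at $O(r|V|)$; this matches the calling context in our reduction, which feeds $f(e) = \sum_{j=1}^i \1(e \in P_j) \in \{0,1,\ldots,k\}$.
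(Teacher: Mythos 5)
Your proposal is correct and follows essentially the same route as the paper's proof: the same table $D[v][j]$ of minimum $\omega$-weight over $sv$-paths with $f$-weight at most $j$, filled level by level, with positive-$f$ edges handled by lookups into lower levels and zero-$f$ edges handled by a within-level Dijkstra seeded with those initial estimates (the paper phrases this seeding via auxiliary edges from $s$, which is the same multi-source Dijkstra you describe). The accounting of $O(|V|^2)$ per level over $O(r|V|)$ levels, and the observation that positive $\omega$ forces the optimal walks to be simple, also match the paper.
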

\begin{proof}
Let $D(.,.)$ be a $|V| \times r(|V|-1)$ table with the entry values defined as the following:
\[
D(v, c) = \min\left\{\sum_{e \in E(P)} \omega(e) : P \mbox{ is an } sv\mbox{-path with} \sum_{e \in E(P)} f(e) \le c\right\}.
\]
If the RHS takes the minimum from an empty set, then we define $D(v, c) = \infty$. 
By definition, $P^*$ has $\sum_{e \in E(P^*)} f(e) = c^*$ so that \[ 
    c^* = \min\{c \in [0, r(|V|-1)] : D(t, c) \le c \cdot dis_\omega(s, t)\}
\]
where $dis_\omega(s, t)$ denotes the distance under $\omega$ from $s$ to $t$.
By backtracking from the $D(t, c^*)$, one can construct $P^*$ from the table $D$ in $O(r|V|)$ time. Hence, our goal is to fill out the entries in $D$ using $O(r|V|^3)$ time.

To compute $D(v, 0)$ for all $v \in V$, it suffices to run Dijkstra's algorithm~\cite{Dijkstra59} on the graph obtained from $G$ with the removal of the edges whose $f(e) > 0$. This step takes $O(|V|^2)$ time. 

Given $D(v, c')$ for all $v \in V, c' \in [0, c-1]$, to compute $D(v, c)$ for all $v \in V$ there are two cases to discuss. We say an $sv$-path $P$ \emph{admits} $D(v, c')$ if $P$ has $\sum_{e \in E(P)} \omega(e) = D(v, c')$ and $\sum_{e \in E(P)} f(e) \le c'$.

\begin{enumerate}
    \item[] Case 1. Some $sv$-path $(s, \ldots, x, v)$ with $f((x, v)) > 0$ admits $D(v, c)$.
    \item[] Case 2. Some $sw$-path $P = (s, \ldots, y, w)$ with $f((y, w)) > 0$ admits $D(w, c)$, and $P$ together with some edges $e$ whose $f(e) = 0$ admits $D(v, c)$.
\end{enumerate}

For Case 1, it suffices to check $D(x, c-f((x, v)))$ for all the neighbors of $v$ with $f((x, v)) > 0$. This step takes $O(|V|)$ time. For Case 2, observe that $D(w, c)$ can be determined by Case 1. It suffices to initialize $D(v, c)$ for all $v \in V$ as the values obtained from Case 1 and update $D(v, c)$ if there is a path $Q$ from some node $w$ to $v$ so that every edge $e \in E(Q)$ has $f(e) = 0$ and $D(w, c) + \sum_{e \in E(Q)} \omega(e) < D(v, c)$. This can be implemented in $O(|V|^2)$ time for each $c$ by running Dijkstra's algorithm on the graph $G = (V, E \cup E_{1} \setminus E_2)$ where $E_1$ is a set of new directed edges $(s, v)$ for all $v \in V$ with $f((s, v)) = 0$ and $\omega((s, v))$ as the initial $D(v, c)$ obtained from Case 1, and $E_2$ is the set of all the edges $e \in E$ that have $f(e) > 0$. It is worth noting that, though we do not explicitly check whether the found paths contain cycles or not (i.e. not simple paths), the paths found to admit $D(v, c)$ must be simple since all shortest paths in a positive weight graph are simple.

To sum up, an optimal solution $P^*$ can be found in $O(r|V|^3)$ time. 
\end{proof}

\noindent\textbf{Self-Avoiding Constraint} We now turn to solving the associated (non-relaxed) $BCO$ problem, by generalizing the above lemma to Corollary~\ref{cor:RBiSP}. Thus \cref{cor:RBiSP} will help us avoid the situation that a furthest insertion returns a path that is already picked by some previous furthest insertion.

\begin{corollary}[Exact solution to the $BCO$ problem]
\label{cor:RBiSP}
Given a real $c \ge 1$, a directed simple graph $G = (V \cup \{s, t\}, E)$ associated with two weight functions on edges $\omega: E \rightarrow \mathbb{R}^{+}$, $ f: E \rightarrow \{0, 1, \ldots, r\}$, and two disjoint subsets of edges $E_{in}, E_{ex} \subseteq E$ so that all edges in $E_{in}$ together form a directed simple path $P_{\rm prefix}$ starting from node $s$, there exists an $O(r|V|^3)$-time algorithm to output an $c$-$st$-shortest path $P^*$ under $\omega$ so that $\sum_{e \in E(P^*)} f(e)$ is minimum among all the $c$-$st$-shortest paths $P$ that contain $P_{\rm prefix}$ as a prefix and contain no edges from $E_{ex}$, if such an $c$-$st$-shortest path exists. 
\end{corollary}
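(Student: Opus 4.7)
The plan is to reduce the corollary to Lemma~\ref{lem:BiSP} by constructing a modified graph and adjusting the absolute weight budget accordingly. Let $P_{\rm prefix} = (s=u_0, u_1, \ldots, u_p)$ and set $W_{\rm prefix} = \sum_{e \in E_{in}} \omega(e)$ and $F_{\rm prefix} = \sum_{e \in E_{in}} f(e)$. First compute $dis_\omega(s,t)$ in $G$ via Dijkstra's algorithm in $O(|V|^2)$ time, which fixes the absolute $\omega$-budget $c \cdot dis_\omega(s,t)$ that any valid $P^*$ must meet. Next, construct $G' = (V', E')$ by removing the vertex set $\{u_0, u_1, \ldots, u_{p-1}\}$ (together with all incident edges) and additionally removing all edges of $E_{ex}$. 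Removing the interior prefix vertices is essential: it guarantees that any $u_p$-to-$t$ path found in $G'$, when concatenated with $P_{\rm prefix}$, yields a \emph{simple} $st$-path.

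Next I would invoke a minor variant of Lemma~\ref{lem:BiSP} on $G'$ with source $u_p$ and sink $t$, asking for a $u_p t$-path $Q^*$ that minimizes $\sum_{e \in E(Q^*)} f(e)$ subject to the absolute constraint $\sum_{e \in E(Q^*)} \omega(e) \le c \cdot dis_\omega(s,t) - W_{\rm prefix}$. The only change from the lemma's statement is that we use this precomputed absolute budget in place of the multiplicative form $c \sum \omega(e)$; the algorithm in the proof of Lemma~\ref{lem:BiSP} fills the table $D(v, c')$ of minimum $\omega$-weight over paths with $f$-weight at most $c'$ and then scans this table for the smallest $c'$ meeting any prescribed $\omega$-upper-bound, so the modification is cosmetic and the running time is still $O(r|V'|^3) = O(r|V|^3)$. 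Return $P^* := P_{\rm prefix} \circ Q^*$ (reporting infeasibility if no valid $Q^*$ exists).

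For correctness, note that any feasible solution $P$ to the corollary is a simple $st$-path containing $P_{\rm prefix}$ as a prefix and avoiding $E_{ex}$, hence its suffix from $u_p$ to $t$ is a valid $u_p t$-path in $G'$ respecting the induced $\omega$-budget. Thus $f(Q^*) \le f(P) - F_{\rm prefix}$, giving $f(P^*) = F_{\rm prefix} + f(Q^*) \le f(P)$ and establishing optimality. Simplicity of $P^*$ follows since $Q^*$ is simple (by the positive-$\omega$ argument used inside Lemma~\ref{lem:BiSP}) and $V(Q^*) \cap V(P_{\rm prefix}) = \{u_p\}$ by construction of $G'$. The main subtlety, and the only place the proof goes beyond a black-box call to Lemma~\ref{lem:BiSP}, is arguing that the vertex deletion preserves \emph{all} feasible solutions: this is true precisely because any feasible $st$-path, being simple and containing $P_{\rm prefix}$ as a prefix, must avoid $u_0,\ldots,u_{p-1}$ in its suffix. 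Together with the straightforward removal of $E_{ex}$, this completes the reduction.
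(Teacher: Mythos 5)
Your proposal is correct and follows essentially the same route as the paper's own (one-line) proof: delete the interior prefix vertices and the edges of $E_{ex}$, then invoke Lemma~\ref{lem:BiSP} to find the optimal suffix from the last prefix vertex to $t$. Your write-up is more careful than the paper's in one respect worth keeping: you make explicit that the $\omega$-budget must be converted to the absolute threshold $c \cdot dis_\omega(s,t) - W_{\rm prefix}$ computed in the original graph $G$ (rather than re-using the multiplicative form relative to distances in the modified graph), which the paper's proof leaves implicit.
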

\begin{proof}
Let $P_{\rm prefix} = (s, \ldots, v)$. This suffices to find a feasible $vt$-path $P_{\rm suffix} = (v, \ldots, t)$ on the graph obtained from $G$ with the removal of all the nodes in $P_{\rm prefix}$ except $v$ and with the removal of all edges in $E_{ex}$ and those incident to the deleted nodes. This can be computed by Lemma~\ref{lem:BiSP}.
\end{proof}

We are ready to state our main result for the diverse $c$-$st$-shortest paths.

\begin{theorem}[$(2,c)$ bi-approximation to the Diversity Problem on Shortest Paths]
\label{thm:MainSP}
For any directed simple graph $G = (V \cup \{s, t\}, E)$, given a $c>1$ and a $k \in \mathbb{N}$, there exists an $O(k^3 |V|^4)$-time algorithm that, if $G$ contains at least $k$ distinct $c$-$st$-shortest paths, computes a set $S$ of $k$ distinct $c$-$st$-shortest paths so that the sum of all pairwise Hamming distances between two paths in $S$ is at least one half of the maximum possible; otherwise, reports ``Non-existent.''
\end{theorem}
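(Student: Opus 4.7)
I would follow Theorem~\ref{thm:reduction} (Type 3), building an exact $(1,1)$ bi-approximation oracle for the associated similarity BCO on $c$-$st$-shortest paths and calling it $k$ times in a farthest-point insertion. Starting with an arbitrary $c$-$st$-shortest path $P_1$ (obtainable from a single call to Corollary~\ref{cor:RBiSP} with trivial $f \equiv 0$, which also detects nonexistence), at iteration $i+1 \le k$ I would assign every edge $e$ the overlap weight $f(e) := \sum_{j=1}^{i} \1(e \in E(P_j)) \in \{0,1,\ldots,i\}$, so that $\sum_{e\in E(P)} f(e) = \sum_{j=1}^{i} |E(P) \cap E(P_j)|$ is the total overlap of a candidate path $P$ with the already-selected ones. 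Minimizing this overlap over $c$-$st$-shortest paths is, modulo constants arising from $d(P,P_j) = |E(P)|+|E(P_j)|-2|E(P)\cap E(P_j)|$, exactly the similarity BCO objective, and Corollary~\ref{cor:RBiSP} with $\omega$ (the original weights), $f$, $r := i \le k$, and empty $E_{\mathrm{in}}, E_{\mathrm{ex}}$ delivers an exact minimizer in $O(k|V|^3)$ time per call.

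\textbf{Self-avoidance via Lawler.} The minimizer returned above may coincide with some already-selected $P_j$, so to extract a farthest path \emph{outside} $\{P_1,\ldots,P_i\}$ I would adapt Lawler's partition scheme using the forced-prefix/forbidden-edge interface of Corollary~\ref{cor:RBiSP}. Place the first minimizer $\widetilde P_1$ in a candidate pool; for each edge $e$ along $\widetilde P_1$, call Corollary~\ref{cor:RBiSP} on the instance that forces the prefix of $\widetilde P_1$ up to $e$ and forbids $e$, adding each answer to the pool; let $\widetilde P_2$ be the pool element of smallest overlap, and iterate, creating at most $|V|$ new branches per newly-drawn $\widetilde P_j$. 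Within $i+1$ drawn candidates at least one must fall outside $\{P_1,\ldots,P_i\}$; the first such candidate is selected as $P_{i+1}$. If the pool is exhausted before $k$ distinct paths have been produced in total, the graph admits fewer than $k$ distinct $c$-$st$-shortest paths and the algorithm reports ``Non-existent.''

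\textbf{Running time and guarantee.} Each outer iteration performs at most $O(k|V|)$ calls of Corollary~\ref{cor:RBiSP} (up to $k$ Lawler candidates $\times$ $O(|V|)$ branches per candidate), each costing $O(k|V|^3)$, giving $O(k^2|V|^4)$ per iteration and the claimed $O(k^3|V|^4)$ overall. The $(2,c)$ approximation follows directly from Theorem~\ref{thm:reduction} (Type 3) applied to this exact oracle: every output path is a $c$-$st$-shortest path by construction, and exact farthest-point insertion yields a factor-$2$ approximation on the sum-of-pairwise-distances objective.

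\textbf{Main obstacle.} The principal delicacy is verifying that Lawler's partition expansion enumerates candidates in nondecreasing order of the BCO objective, so that the first drawn candidate outside $\{P_1,\ldots,P_i\}$ is truly optimal in $\SOL_\Pi \setminus \{P_1,\ldots,P_i\}$; this is the standard Yen/Lawler correctness argument, but must be carried out with edge weights $f$ that themselves depend on the previously selected paths. A secondary wrinkle is that, unlike spanning trees or matchings with fixed cardinality, $c$-$st$-shortest paths have variable edge counts, so ``minimum overlap'' and ``maximum Hamming distance'' differ by a path-length term; this is handled either by a $|V|$-factor DP extension of Lemma~\ref{lem:BiSP} that additionally tracks $|E(P)|$ (still within the claimed running time) or absorbed into the analysis of Theorem~\ref{thm:reduction}.
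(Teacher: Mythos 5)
Your proposal matches the paper's proof essentially step for step: the same overlap weights $f_i(e)=\sum_{j\le i}\1(e\in E(P_j))$, the same exact dynamic program (Lemma~\ref{lem:BiSP} and its prefix/exclusion variant Corollary~\ref{cor:RBiSP}) to solve the BCO, the same Lawler-style enumeration of the $i+1$ best candidates to enforce self-avoidance, and the same $O(k\cdot k|V|\cdot k|V|^3)=O(k^3|V|^4)$ accounting, with the $(2,c)$ guarantee inherited from Type~3 of Theorem~\ref{thm:reduction}. The variable-edge-count wrinkle you flag is genuine, but the paper's own proof silently identifies minimizing $\sum_{e\in E(P)}f_i(e)$ with maximizing $\sum_j d(P,P_j)$ without tracking $|E(P)|$, so on that point you are if anything more careful than the source.
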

\begin{proof} Let $P_1$ be an arbitrary $1$-$st$-shortest path (i.e. a shortest path from $s$ to $t$), which can be computed in $O(|V|^2)$ time by Dijkstra's algorithm. We perform the furthest insertion mentioned in Lemma~\ref{4approx} to obtain the $c$-$st$-shortest paths $P_2, P_3, \ldots, P_k$. The collection $\{P_i : i \in [1, k]\}$ is a desired solution. 

To perform the $i$-th furthest insertion for $i \in [1, k-1]$, we need an algorithm for the problem that, given $\{P_j : 1 \le j \le i \}$, find an $c$-$st$-shortest path $P_{i+1}$ so that the following conditions both hold:

\begin{enumerate}
  \item $\sum_{e \in E(P_{i+1})} f_i(e)$ is minimum among all $c$-$st$-shortest paths where $f_i(e)$ for $e \in E$ is defined as the number of times that $e$ appears in $P_j$ for all $j \in [1, i]$. Hence, $f_i$ maps $E$ to $\{0, 1, \ldots, i\}$.
  \item $P_{i+1} \ne P_j$ for all $j \in [1, i]$.
\end{enumerate}

To find an $c$-$st$-shortest path that satisfies the first condition, it is an instance of the bicriteria shortest path problem with two weight functions $\omega$ and $f_i$. By Lemma~\ref{lem:BiSP}, this can be done in $O(i|V|^3)$ time. 

However, the $c$-$st$-shortest path found by minimizing $\sum_{e} f_i(e)$ is not necessarily different from those found in the previous furthest insertions. To remedy, one can find the $c$-$st$-shortest paths whose $\sum_{e} f_i(e)$ are the $j$-th smallest for all $j \in [1, i+1]$. Some of the $i+1$ $c$-$st$-shortest paths can meet the second condition. This step can be implemented by the standard approach due to Lawler~\cite{Lawler72}. That is, suppose in time $T$ one can compute $P_{i+1}$ with an additional constraint that some given edge set $E_{i, in}$ (all edges in $E_{i, in}$ together form a directed path starting from node $s$) must be contained in $E(P_{i+1})$ and some given edge set $E_{i, ex}$ must not be contained in $E(P_{i+1})$, then enumerating the $i+1$ smallest candidates for $P_{i+1}$ can be done in $O(i |V| T)$ time. By Corollary~\ref{cor:RBiSP}, $T = O(i|V|^3)$. Thus, the $i$-th furthest insertion can be done in $O(i^2 |V|^4)$ time. In total, our approach needs $O(k^3 |V|^4)$ time  as desired. 
\end{proof}

\section{Application 3: Diverse Approximate Maximum Matchings}\label{sec:matching}

Consider the diversity computational problem for computing $k$ many $c$ maximum matchings for undirected graphs. 
In~\cite{fomin2020diverse}, the authors present an algorithm, among others, to find a pair of maximum matchings for bipartite graphs whose Hamming distance is maximized. In contrast, our result can be used to find $k \ge 2$ approximate maximum matchings for any graph whose diversity (i.e. the average Hamming distance) approximates the largest possible by a factor of $2$.

Since $\goal_{\Pi} = \max$, using the Hamming metric on the edge set of the solutions gives us a BCO with minimization in the objective function (the sum of the total distances) and constraints with lower bound.

We show that this problem can be restated into the budgeted matching problem~\cite{BergerBGS11}. As noted in~\cite{BergerBGS11}, though the budgeted matching is in general $\mathcal{NP}$-hard, if both the weight and cost functions are integral and have a range bounded by a polynomial in $|V|$, then it can be solved in polynomial time with a good probability by a reduction to the exact perfect matching problem~\cite{CameriniGM92,MulmuleyVV87}. The exact running time for such a case is not stated explicitly in~\cite{BergerBGS11}. We combine the algorithm in~\cite{BergerBGS11} and the approach by Lawler \cite{Lawler72} to prove our main theorem for diverse matchings (\cref{thm:matchmain}). 

Define the distance between two matchings as the Hamming distance between the edge sets. The associated relaxed $BCO^{r}$ is a maximizatin problem with lower bounded constraints. It can be restated as: given a non-empty set $S$ of $c$-maximum matchings, find a $c$-maximum matching $M \notin S$ so that $W(S \cup \{M\})$ is maximum among all $W(S \cup \{M'\})$ for $c$-maximum matching $M' \notin S$ (here $W(X)$ is the sum of pairwise distances between all matchings in $X$. This problem can be restated into the budgeted matching problem~\cite{BergerBGS11}. As noted in~\cite{BergerBGS11}, though the budgeted matching is in general $\mathcal{NP}$-hard, if both the weight and cost functions are integral and have a range bounded by a polynomial in $|V|$, then it can be solved in polynomial time with a good probability by a reduction to the exact perfect matching problem~\cite{CameriniGM92,MulmuleyVV87}. The exact running time for such a case is not stated explicitly in~\cite{BergerBGS11}, and we analyze it as below.

\begin{lemma}[Restricted Budgeted Matching Problem]\label{lem:RBMP}
Given an undirected simple graph $G = (V, E)$ and a cost function $c: E \rightarrow \{0, 1, \ldots, r\}$ on edges, there exists an $O(r^2 |V|^6 \log^2 r|V|)$-time randomized algorithm that can find a matching of smallest cost among all $c$-maximum matchings with some constant success probability where the cost of a matching is the sum of the costs of all edges in the matching. 
\end{lemma}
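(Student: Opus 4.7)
The plan is to cast the restricted budgeted matching problem as a polynomial number of instances of the exact (perfect) matching problem of Mulmuley, Vazirani and Vazirani, following the template of Berger, Bonifaci, Grandoni and Schäfer cited just above, and then to carefully charge the integer arithmetic in order to obtain the stated bit-complexity bound. First, I would compute the maximum matching size $\mu(G)$ via Edmonds' algorithm, so that a matching qualifies as ``$c$-maximum'' exactly when its size is at least $\lceil \mu(G)/c_{\text{approx}}\rceil$ (here the approximation factor $c_{\text{approx}}$ is notationally distinct from the cost function $c(\cdot)$ appearing in the lemma statement). The problem then reduces to: among all matchings of size at least this threshold, output one of minimum total cost.

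Since each edge cost lies in $\{0,\ldots,r\}$ and each matching has at most $|V|/2$ edges, there are $O(r|V|^2)$ admissible (size, total-cost) pairs $(s,t)$. I would enumerate these pairs and, for each of them, call a subroutine that decides whether some matching has size exactly $s$ and cost exactly $t$ and, if so, returns one; the feasible pair with the smallest $t$ yields the desired output. The subroutine is the randomized Isolation-Lemma procedure of Mulmuley--Vazirani--Vazirani applied to the Tutte matrix of $G$: I would encode the target $(s,t)$ into a single integer per-edge weight (e.g.\ $w(e) = N + c(e)$ for an $N$ large enough that the size-component and cost-component of the total weight can be decoded separately), perturb each edge independently by a uniform random integer from a polynomially large range so that the minimum-weight matching of the prescribed weight is unique with constant probability, and recover that matching from determinants of the symbolic Tutte matrix.

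The main obstacle is the bookkeeping of integer bit complexity. The symbolic Tutte-matrix entries have bit-length $O(\log(r|V|))$; each integer determinant requires $O(|V|^{\omega})$ arithmetic operations (with $\omega$ the matrix-multiplication exponent, which satisfies $\omega\le 3$) on integers whose bit-lengths grow to $O(|V|\log(r|V|))$ during elimination, and each arithmetic operation in turn costs $O(\log^2(r|V|))$ time via fast integer multiplication. Together with the edge-deletion determinants used to read off the matching and a constant number of MVV repetitions to boost the Isolation-Lemma success probability to a constant, the per-pair cost works out to $O(r|V|^4\log^2(r|V|))$; multiplying by the $O(r|V|^2)$ enumeration yields the claimed $O(r^2|V|^6\log^2(r|V|))$ bound. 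Verifying this accounting carefully, and confirming that the decoding cleanly separates the size and cost components from the single weight function $w(e)$, is the technical heart of the argument.
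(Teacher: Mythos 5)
Your proposal follows essentially the same route as the paper: fix the size threshold for $c$-maximality, enumerate the $O(|V|)\times O(r|V|)$ candidate (size, cost) pairs, encode both quantities into a single edge weight of the form $\Gamma + c(e)$ with $\Gamma$ large, and solve each resulting exact-weight matching instance with the randomized Mulmuley--Vazirani--Vazirani procedure; the final accounting $O(r|V|^2)$ pairs times $O(r|V|^4\log^2 r|V|)$ per pair matches the paper's bound.

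One step, as you literally wrote it, would fail: you propose applying the isolation-lemma procedure to the Tutte matrix \emph{of $G$}, but the determinant of that matrix only detects \emph{perfect} matchings of $G$, whereas the $c$-maximum matchings you are searching for are generally not perfect (and for a graph with no perfect matching the determinant is identically zero, so every query would come back negative). The missing ingredient is the padding in the Berger et al.\ reduction, which the paper carries out explicitly: build an auxiliary graph $H$ by adding $|V|$ dummy vertices $z_1,\dots,z_{|V|}$, join them to each other and to all of $V$ by weight-$0$ edges, and give each original edge weight $\Gamma + c(e)$ with $\Gamma \ge r\lceil |V|/2\rceil + 1$. Then a matching of $G$ with $x$ edges and cost $y$ corresponds exactly to a perfect matching of $H$ of weight $x\Gamma + y$, and only now is the problem an instance of exact \emph{perfect} matching to which MVV applies. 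Since you cite the Berger et al.\ template, this is a small repair rather than a wrong approach, but your ``decode size and cost from the single weight function'' step only works after this padding, not on $G$ itself.
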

\begin{proof}

Suppose that, for all $x \in [0, X]$ with $X = O(|V|)$, for all $y \in [0, Y]$ with $Y = O(r |V|)$, we can decide whether $G$ contains a matching $M$ that consists of $x$ edges and has cost $y$. Then, the desired matching $M^*$ can be found.

As the reduction in~\cite{BergerBGS11}, we construct an edge-weighted graph $H = (U, F)$ so that 
\begin{itemize}
    \item $U = V \cup \{z_1, z_2, \ldots, z_{|V|}\}$ where $z_i \notin V$ for all $i \in [1, |V|]$,
    \item $F = E \cup \{\{z_i, z_j\} : i \ne j \in [1, |V|]\} \cup \{\{x, z_i\} : x \in V, i \in [1, |V|]\}$, and
    \item for every edge $e \in F$, if $e$ is also in $E$, then it has weight $\Gamma+c(e)$ for some sufficiently large integer $\Gamma$ to be determined later; otherwise, it has weight $0$. 
\end{itemize}

By setting $\Gamma \ge r\lceil |V|/2\rceil$+1, $G$ has a matching $M$ that consists of $x$ edges and has cost $y$ if and only if $H$ has a perfect matching of weight $x\Gamma + y$. This is an instance of the exact perfect matching problem~\cite{CameriniGM92,MulmuleyVV87}, which can be solved by a randomized algorithm in $O(r|V|^4 \log r|V|)$ time with some constant success probability. Summing over all choices of $x$ and $y$, the running time is $O(r^2 |V|^6 \log^2 r|V|)$.
\end{proof}

We are ready to prove our main result for the diverse $c$-maximum matchings.

\begin{theorem}\label{thm:matchmain}
There exists a $O(k^{4}|V|^{7} \log^{3} k|V|)$ time, $(2,c)$ bi-approximation to the diversity computational problem for $c$-maximum matchings, with failure probability $1/ |V|^{\Omega(1)}$.
\end{theorem}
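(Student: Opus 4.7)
The plan is to apply Theorem~\ref{thm:reduction} of Type 3: if we can give a $(1,1)$ bi-approximation to the similarity BCO for $c$-maximum matchings, we obtain the desired $(2,c)$ bi-approximation to the DCP. Having selected matchings $M_1,\ldots,M_i$, the similarity BCO asks for a $c$-maximum matching $M \notin \{M_1,\ldots,M_i\}$ minimizing $g_s(M) = \sum_{j=1}^{i}|M \cap M_j|$ subject to $|M| \ge |M^*|/c$, where $M^*$ is a maximum matching of $G$. Define an integer edge-cost function $c(e) := |\{j \in [i] : e \in M_j\}|$ taking values in $\{0,1,\ldots,i\} \subseteq \{0,1,\ldots,k\}$. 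Then $\sum_{e \in M} c(e) = \sum_{j=1}^{i}|M \cap M_j| = g_s(M)$, so the relaxed similarity BCO is exactly the restricted budgeted matching problem of Lemma~\ref{lem:RBMP} with $r = k$, solvable in $O(k^2 |V|^6 \log^2 k|V|)$ randomized time with constant success probability.

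The next step is to enforce the self-avoidance constraint $M \notin \{M_1,\ldots,M_i\}$, which Lemma~\ref{lem:RBMP} does not by itself guarantee. Following Lawler's approach~\cite{Lawler72} (as in the spanning tree and shortest path sections), we enumerate the $i+1$ lowest-cost $c$-maximum matchings; at least one of them must be new and can be taken as $M_{i+1}$. Lawler branching generates subproblems of the form ``find a min-cost $c$-maximum matching that must include every edge in some set $E_{in}$ and must exclude every edge in some set $E_{ex}$'', which reduces back to Lemma~\ref{lem:RBMP} on the graph obtained by deleting $E_{ex}$ together with all edges incident to matched endpoints of $E_{in}$. Because a matching has at most $|V|/2$ edges, enumerating the $i+1$ best candidates costs $O(k|V|)$ calls to Lemma~\ref{lem:RBMP}. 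By Lemma~\ref{lem:opt_k-st}-style reasoning, or more precisely by Theorem~\ref{thm:reduction} Type 3, the selected $M_{i+1}$ then maximizes $\sum_{j=1}^{i} d(M, M_j)$ over all new $c$-maximum matchings.

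To drive the overall failure probability down to $1/|V|^{\Omega(1)}$, we amplify the constant success probability of each invocation of Lemma~\ref{lem:RBMP} by independent repetition $\Theta(\log |V|)$ times and keeping the best candidate, paying an extra $\log k|V|$ factor per call. A union bound over the $O(k^2 |V|)$ total calls across the whole algorithm (namely $k$ furthest insertions, each doing $O(k|V|)$ Lawler subproblems) keeps the failure probability below $1/|V|^{\Omega(1)}$. The total running time is $k \cdot O(k|V|) \cdot O(k^2 |V|^6 \log^3 k|V|) = O(k^4 |V|^7 \log^3 k|V|)$, matching the claim.

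The main obstacle I expect is verifying that Lawler-style branching is compatible with the randomized budgeted matching subroutine: in particular, that enforcing inclusion/exclusion constraints on edges neither blows up the cost range $r$ beyond $O(k)$ nor degrades the success probability, and that the base case $M_1$ (chosen as any maximum matching) does not create degenerate Lawler subproblems. A secondary technical point is that, since $|M|$ can vary slightly among $c$-maximum matchings, the identification of ``minimize $\sum_j |M \cap M_j|$'' with ``maximize $\sum_j d(M,M_j)$'' is only approximate via the similarity-vs-distance equivalence of Theorem~\ref{thm:reduction}; however, this is exactly what Type 3 of the reduction was designed to absorb into a $(2,c)$ guarantee.
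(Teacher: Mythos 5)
Your proposal is correct and follows essentially the same route as the paper: reduce each furthest-insertion step to the restricted budgeted matching problem of Lemma~\ref{lem:RBMP} with edge costs equal to the multiplicities $c_i(e)$, enforce self-avoidance by Lawler branching over $O(i|V|)$ include/exclude subproblems, and amplify the constant success probability by $O(\log k|V|)$ repetitions before a union bound, yielding the same $O(k^4|V|^7\log^3 k|V|)$ bound. The only difference is cosmetic: you invoke Type~3 (similarity BCO) of Theorem~\ref{thm:reduction} while the paper's remarks classify matchings under Type~1, but the underlying subroutine, the handling of the $\min\sum_j|M\cap M_j|$ versus $\max\sum_j d(M,M_j)$ discrepancy, and the running-time accounting coincide with the paper's proof.
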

\begin{proof}
 Let $M_1$ be an arbitrary maximum matching, which can be computed in $O(|V|^{2.5})$ time~\cite{MicaliV80}. For each $i \in [1, k-1]$, given $S_i = \{M_j : j \in [1, i]\}$, find a $c$-maximum matching $M_{i+1} \notin S_i$ so that $W(S_i \cup M_{i+1})$ is maximum among all $W(S_i \cup M')$ for $c$-maximum matching $M' \notin S_i$. By Lemma~\ref{lem:RBMP}, set the cost function $c_i(e)$ for every $e \in E$ as the number of times that edge $e$ appears in $M_j$ for all $j \in [1, i]$ and apply the algorithm for Lemma~\ref{lem:RBMP} to solve the instance $(G = (V, E), c_i, c)$. This returns an $c$-maximum matching $M^\dagger$ so that $W(S_i \cup M^\dagger)$ is maximum among all $W(S_i \cup M')$ that $M'$ is an $c$-maximum matching and may or may not be contained in $S_{i}$. Thus we need to guarantee self-avoidance.
 
 To remedy, we enumerate the best $i+1$ candidates for $M_{i+1}$. Some of them is not contained in $S_{i}$. This enumeration can be done by running the algorithm for Lemma~\ref{lem:RBMP} $O(i|V|)$ times, each of which preselects some edges to be included in and some to be excluded from  $M_{i+1}$, as the Lawler's approach~\cite{Lawler72}. For each $i \in [1, k-1]$, this step takes $O(k^3 |V|^7 \log^3 k|V|)$ time and may fail with probability $(k|V|)^{-\Omega(1)}$. The failure probability is $(k|V|)^{-\Omega(1)}$ rather than a constant because we can run the algorithm that may err $O(\log k|V|)$ times.

Summing up the running time for the $k-1$ furthest insertions, the total running time is $O(k^4 |V|^7 \log^3 k|V|)$, and the failure probability is $1/|V|^{\Omega(1)}$ by the Union bound.
\end{proof}

\section{Application 4: Diverse Minimum Weight Matroid Bases and Minimum Spanning Trees}\label{sec:matroid}

One of the original ways to attack the peripatetic salesman problem (Krarup \cite{krarup1995peripatetic}) was to study the $k$ edge-disjoint spanning trees problem \cite{clausen1980finding}. Note that the existence of such trees is not guaranteed, and one can use our results in Section~\ref{sec:st} to maximize diversity of the $k$ trees found. 

However, for an application to the TSP problem, cost conditions must be taken into account. Here we study the diverse computational problem (DCP) on minimum spanning trees: Given a weighted undirected graph $G=(V,E)$ with nonnegative weights $w(e)$, $c>1$ and a $k \in \mathbb{N}$, return $k$ spanning trees of $G$ such that each spanning tree is a $c$-approximate minimum spanning tree, and subject to this, the diversity of the $k$ trees is maximized. Here again the diversity of a set of trees is the sum of pairwise distances between them, and the distance between  two trees is the size of their symmetric difference.

Our results in this section generalize to the problem of finding $k$ diverse bases of a matroid such that every basis in the solution set is a $c$ approximate minimum-weight basis. The DCP on MSTs is a special case of this problem. However, in order to not introduce extra notation and definitions here, we will describe our method for minimum spanning trees. We will then briefly sketch how to extend the algorithm to the general matroid case.

Starting with $T_{1}=MST(G)$ (a minimum spanning tree on $G$, computable in polynomial time), assume we have obtained $i$ trees $T_1,\cdots,T_i$, all of which are $c$-approximate minimum spanning trees. Assign to each edge a length $\ell(e)$ which equals $|j: 1 \leq j \leq i, e \in T_{j} |$. 

\begin{claim}\label{claimmst}
Given $T_1,\cdots,T_i$, finding $T_{i+1}$ that maximizes $\sum_{j=1}^{i} d(T,T_j)$ is equivalent to finding $T$ that minimizes $\sum_{e \in T} \ell(e)$.
\end{claim}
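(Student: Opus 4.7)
The plan is to reduce the maximization of $\sum_{j=1}^{i} d(T,T_j)$ to the minimization of $\sum_{e \in T}\ell(e)$ by a standard double-counting argument, exploiting the fact that every spanning tree of $G$ on $n$ vertices has exactly $n-1$ edges.

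First I would rewrite the Hamming distance using the identity $d(T,T_j) = |T \triangle T_j| = |T| + |T_j| - 2|T \cap T_j|$. Since both $T$ and each $T_j$ are spanning trees, $|T| = |T_j| = n-1$, so $d(T,T_j) = 2(n-1) - 2|T \cap T_j|$. Summing over $j$, we obtain
\begin{equation*}
\sum_{j=1}^{i} d(T,T_j) \;=\; 2i(n-1) \;-\; 2\sum_{j=1}^{i} |T \cap T_j|.
\end{equation*}

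Next I would swap the order of summation in the inner term. Writing $|T \cap T_j| = \sum_{e \in T} \mathbbm{1}[e \in T_j]$ and exchanging the sums gives
\begin{equation*}
\sum_{j=1}^{i} |T \cap T_j| \;=\; \sum_{e \in T} \sum_{j=1}^{i} \mathbbm{1}[e \in T_j] \;=\; \sum_{e \in T} \ell(e),
\end{equation*}
by the definition of $\ell(e)$. Substituting back yields
\begin{equation*}
\sum_{j=1}^{i} d(T,T_j) \;=\; 2i(n-1) \;-\; 2 \sum_{e \in T} \ell(e).
\end{equation*}

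Since $2i(n-1)$ is a constant independent of $T$, maximizing the left-hand side over the space of spanning trees $T$ is equivalent to minimizing $\sum_{e \in T} \ell(e)$ over the same space, which proves the claim. There is no genuine obstacle here; the argument is a routine counting identity, and the only thing worth highlighting is that it relies crucially on all candidate trees having the same cardinality $n-1$, which is what allows the $|T| + |T_j|$ terms to collapse into an additive constant.
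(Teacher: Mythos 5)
Your proof is correct and is exactly the ``explicit calculation'' the paper invokes in one line, just written out in full via the identity $d(T,T_j)=2(n-1)-2|T\cap T_j|$ and an exchange of summation. (Minor note: with $d$ the full symmetric difference your identity $\sum_j d(T,T_j)=2i(n-1)-2\sum_{e\in T}\ell(e)$ is the right one, whereas the paper's stated identity omits the factor $2$; this does not affect the equivalence either way.)
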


\begin{proof}
An explicit calculation reveals that $\sum_{e \in T} \ell(e) = (n-1)i - \sum_{j=1}^{i} d(T,T_j)$.
\end{proof}

Consider now the associated \textit{similarity budget constrained optimization problem}

\begin{equation}
\begin{aligned}
\min \quad & \sum_{e \in T} \ell(e)\\
\textrm{s.t.} \quad & w(T)  \leq c \cdot w(MST(G)) \\
  & T \in \SOL_{\Pi} \setminus \{T_1,\ldots,T_i\}    \\
\end{aligned}
\end{equation}

Here $\SOL_{\Pi}$ is just the set of spanning trees on $G$. We will handle the self-avoiding constraints in a similar fashion as in Section~\ref{sec:st}. For the moment consider the relaxed $BCO^{r}$ where the last constraint is simply $T \in \SOL_{\Pi}$. This is a budget constrained MST with two weights. This problem has been considered by Ravi and Goemans \cite{ravi1996constrained}, who termed it the CMST problem. They provide a $(1,2)$ bi-approximation that runs in near-linear time, and a $(1,1+\epsilon)$ bi-approximation that runs in polynomial time\footnote{The latter is a PTAS, not an FPTAS.}. Also, they show that the $(1,1+\epsilon)$ bi-approximation can be used as a subroutine to compute a $(1+\epsilon,1)$ bi-approximation in pseudopolynomial time.

Applying their results and observing that we are in cases 4 and 5 of Theorem~\ref{thm:reduction}, we get

\begin{theorem}[DCP for Mininum Spanning Trees] 
There exists a
\begin{itemize}
    \item polynomial (in $n,m$ and $k$) time algorithm that outputs a $(4,2c)$ bi-approximation to the DCP problem for MSTs.
    \item polynomial (in $n,m$ and $k$) and exponential in $1/\epsilon$ time algorithm that outputs a $(4,(1+\epsilon)c)$ bi-approximation to the DCP problem for MSTs.
    \item pseudopolynomial time algorithm that outputs a $(4,c)$ bi-approximation to the DCP problem for MSTs, as long as the average distance between the trees in the optimal solution to the $k$ DCP on $c$-approximate minimum spanning trees does not exceed $\frac{4\epsilon(n-1)}{1+2\epsilon}$.
\end{itemize}
\end{theorem}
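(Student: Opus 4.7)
The plan is to invoke Theorem~\ref{thm:reduction} applied to the similarity BCO that Claim~\ref{claimmst} already identifies: at iteration $i+1$ of farthest insertion, having produced $T_1,\ldots,T_i$, I would set the edge length $\ell(e)=|\{j: 1\le j\le i,\ e\in T_j\}|$ and observe that maximizing $\sum_{j=1}^{i} d(T,T_j)$ over $c$-approximate spanning trees is (by Claim~\ref{claimmst}) equivalent to the Constrained MST (CMST) problem of Ravi--Goemans: minimize $\sum_{e\in T}\ell(e)$ subject to $w(T)\le c\cdot w(\mathrm{MST}(G))$. Thus each iteration of the farthest-insertion meta-algorithm is a CMST instance, and the three parts of the theorem correspond to plugging in the three CMST subroutines of~\cite{ravi1996constrained}.

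Concretely, for part~(1) I would feed $\ell$ and the budget $c\cdot w(\mathrm{MST}(G))$ into the near-linear $(1,2)$ bi-approximation for CMST; this is Type~4 of Theorem~\ref{thm:reduction} with $a=1$, $b=2$, yielding a $(4,2c)$ bi-approximation for the DCP. For part~(2) I would use their polynomial-time $(1,1+\epsilon)$ PTAS for CMST, again falling under Type~4 (now with $b=1+\epsilon$), yielding $(4,(1+\epsilon)c)$. For part~(3) I would use their pseudopolynomial $(1+\epsilon,1)$ algorithm for CMST, which is Type~5 of Theorem~\ref{thm:reduction}; the side condition on the average pairwise distance in the optimal DCP solution in Theorem~\ref{thm:reduction} specializes to $D\cdot\frac{4\epsilon}{1+2\epsilon}$, and since the diameter of the space of spanning trees under Hamming distance is $D=n-1$, this matches the stated bound $\frac{4\epsilon(n-1)}{1+2\epsilon}$.

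The main obstacle I expect is the self-avoidance constraint: the CMST subroutines solve the relaxed $BCO^r$ and may return a tree equal to some $T_j$ already in the collection. I would address this exactly as in the proof of Theorem~\ref{spanningtreetheorem}, i.e., by Lawler's enumeration approach~\cite{Lawler72}. Given the winning tree $\widetilde T_1$, for each edge $e\in \widetilde T_1$ I would re-solve the CMST instance with $e$ deleted (or, symmetrically, forced in); iterating this over at most $i+1$ rounds produces $i+1$ distinct candidate trees, at least one of which differs from all of $T_1,\ldots,T_i$, and can be chosen as $T_{i+1}$. Each edge-deletion/contraction yields another CMST instance, so the approximation guarantees of~\cite{ravi1996constrained} still apply. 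This incurs an $O(k\cdot n)$ overhead per outer iteration and an $O(k)$ factor for the $k$ outer iterations from Lemma~\ref{farthestapprox}; all three claimed running-time regimes (polynomial, polynomial with $\epsilon$-dependence, pseudopolynomial) follow directly from those of the corresponding CMST subroutines, completing the proof.
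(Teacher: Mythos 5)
Your proposal matches the paper's proof essentially verbatim: each farthest-insertion step is reduced via Claim~\ref{claimmst} to the Ravi--Goemans CMST problem, the $(1,2)$, $(1,1+\epsilon)$, and $(1+\epsilon,1)$ CMST subroutines are plugged into Types 4 and 5 of Theorem~\ref{thm:reduction}, and self-avoidance is handled by Lawler's enumeration exactly as in Section~\ref{sec:st}. The one point worth noting is that by invoking Type 5 you inherit the condition that the optimal average pairwise distance be \emph{at least} $D\frac{4\epsilon}{1+2\epsilon}$ with $D=n-1$, which is the direction actually required by the reduction; the ``does not exceed'' phrasing in the theorem statement is an inconsistency in the paper, not a flaw in your argument.
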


\noindent\textbf{Extension to Matroids:} It is stated in the paper by Ravi and Goemans \cite{ravi1996constrained} that the same result holds if one replaces the set of spanning trees by the bases of any matroid. It is straightforward to show that the analog of Claim~\ref{claimmst} hold in the matroid setting too. With a bit of work, one can also generalize the approach of Lawler \cite{Lawler72} to avoid self-intersection (the bases found so far), and thus all the techniques generalize to the matroid setting. In all of this, we assume an independence oracle for the matroid, as is standard. In~\cite{fomin2021diverse}, it is shown that, given integers $k, d$, finding $k$ \textit{perfect} matchings so that every pair of the found matchings have Hamming distance at least $d$ is NP-hard. This hardness result also applies to finding weighted diverse bases and weighted diverse common independent sets.

\section{Conclusion}\label{sec:conclusion}
 
 We obtained bi-approximation algorithms for the diversity computational problems associated with approximate minimum-weight bases of a matroid, shortest paths, matchings, and spanning trees. Our general reduction lemma can be applied to any computational problem for which the DUALRESTRICT version defined by Papadimitriou and Yannakakis \cite{papadimitriou2000approximability} can be solved in polynomial time. There are (at least) three open questions:
 \begin{itemize}
    \item Are there other interesting examples of problems to which our reduction lemma applies?
     \item It is noteworthy that while Types 3, 4 and 5 in Theorem~\ref{thm:reduction} contain the class for which DUALRESTRICT can be solved efficiently, the case of $a=b=1+\epsilon$ missing in our theorem would include the larger GAP class studied by Papadimitriou and Yannakakis \cite{papadimitriou2000approximability}. Our current techniques to prove Theorem~\ref{thm:reduction} do not allow us to translate such bi-approximations of BCOs into those for DCPs (at least not without demanding certain stronger condition than in Type 5).
     \item Can the diversity computational problem for approximate TSP tours or Max-TSP tours be solved in polynomial time?
 \end{itemize}

\clearpage

	\bibliography{ref}
	
\end{document}